\newtheorem{theorem}{Theorem}
\newtheorem{lemma}[theorem]{Lemma}
\newtheorem{definition}{Definition}
\providecommand{\tcp}[1]{\textit{#1}}
\newcolumntype{C}[1]{>{\centering\arraybackslash}p{#1}}
\patchcmd{\thebibliography}{\setlength{\itemsep}{\z@}}{}{}{}
\patchcmd{\thebibliography}{\setlength{\parskip}{\z@}}{}{}{}
\title{A Scalable Transaction Management Framework for\\ Consistent Document-Oriented NoSQL Databases}
\author{
Adam A. E. Alflahi$^{1}$, Mohammed A. Y. Mohammed$^{2}$, Abdallah Alsammani$^{3,*}$\\[0.8em]
\small $^{1}$Department of Computer Science and Mathematics,
\small Bahri University, Khartoum, Sudan\\[0.3em]
\small $^{2}$Department of Mathematics and Statistics,
\small Georgia State University, Atlanta, GA, USA\\[0.3em]
 \small $^{3}$Division of Physics, Engineering, Mathematics, and Computer Science,\\
\small Delaware State University, Dover, DE, USA\\[0.5em]
\small $^{*}$Corresponding author: \texttt{aalsammani@desu.edu}
}
\date{}
\begin{document}
\maketitle

\begin{abstract}
\noindent NoSQL databases are widely used in modern applications due to their scalability and schema flexibility, yet they often rely on eventual consistency models that limit reliable transaction processing. This study proposes a four-stage transaction management framework for document-oriented NoSQL databases, with MongoDB as the reference platform. The framework combines transaction lifecycle management, operation classification, pre-execution conflict detection, and an adaptive locking strategy with timeout-based deadlock prevention. Formal correctness analysis shows that the proposed approach guarantees conflict serializability under defined conditions. Experimental evaluation using the Yahoo Cloud Serving Benchmark (YCSB) workloads A, B, and F with concurrency levels from 1 to 100 clients demonstrates a reduction in transaction abort rates from 8.3\% to 4.7\%, elimination of observed deadlocks, and a 34.2\% decrease in latency variance. Throughput improvements ranging from 6.3\% to 18.4\% are observed under high concurrency, particularly for read-modify-write workloads. Distributed experiments on clusters of up to 9 nodes confirm scalability, achieving 15.2\% higher throughput and 53\% lower abort rates than baseline systems. Comparisons with MongoDB's native transactions, CockroachDB, and TiDB indicate that the proposed framework strikes a good balance between consistency guarantees and performance overhead. Sensitivity analysis identifies optimal parameter settings, including a lock timeout of 100\, ms, an initial backoff of 10\, ms, and a maximum backoff of 500\, ms. These results show that carefully designed consistency mechanisms can significantly improve data integrity in NoSQL systems without undermining scalability.
\end{abstract}

\noindent\textbf{Keywords:} NoSQL databases; transaction management; data consistency; conflict serializability; concurrency control; MongoDB; distributed systems.

\section{Introduction}

The exponential growth of web-scale applications over the past two decades has fundamentally reshaped the requirements placed upon database technology. Traditional relational database management systems, while offering robust ACID guarantees encompassing Atomicity, Consistency, Isolation, and Durability, encounter inherent scalability limitations when confronted with the volume, velocity, and variety characteristic of modern data workloads \cite{stonebraker2010}. This tension between transactional rigor and horizontal scalability catalyzed the emergence of NoSQL databases, a diverse family of systems that prioritize elastic scaling and flexible data models over strict transactional semantics \cite{cattell2011}. The theoretical foundation for understanding these design tradeoffs was formalized by Brewer's CAP theorem, which established that distributed systems can simultaneously guarantee at most two of three fundamental properties: consistency, availability, and partition tolerance \cite{gilbert2002, brewer2012}. Since network partitions remain unavoidable in distributed environments, practical systems must choose between consistency and availability during partition events. NoSQL databases have historically favored availability and partition tolerance, embracing eventual consistency models wherein replicas converge to identical states given sufficient time without new updates \cite{vogels2009}.

While eventual consistency suffices for numerous applications, including social media feeds, content caching, and session management, a substantial class of use cases demands stronger guarantees. Financial transactions, inventory management, healthcare records, and collaborative editing systems require assurance that concurrent operations produce predictable, serializable outcomes \cite{bailis2013}. The absence of such guarantees can precipitate data anomalies including lost updates, dirty reads, and phantom reads, potentially causing significant operational and financial consequences. Document-oriented NoSQL databases, exemplified by MongoDB, CouchDB, and Amazon DocumentDB, have gained considerable adoption due to their natural alignment with object-oriented programming paradigms and JSON-based data interchange formats \cite{chodorow2013}. MongoDB in particular has evolved substantially since its initial release, progressively introducing single-document atomicity, configurable write concerns and read preferences, and more recently, multi-document ACID transactions \cite{mongodb2020}. Despite these advances, achieving strong consistency across document boundaries while preserving the performance characteristics that motivated NoSQL adoption remains an active and consequential area of research \cite{ramzan2019, ahmed2023}.

This study addresses the consistency gap through a novel transaction management framework designed specifically for document-oriented NoSQL databases operating under BASE semantics. The fundamental challenge lies in providing serializable transaction guarantees while maintaining at least 90\% of native database throughput, adding no more than 10\% average latency overhead, eliminating deadlocks through timeout-based prevention, reducing transaction abort rates by at least 30\%, and requiring no modification to database internals or application code. To meet these objectives, we propose a four-stage architecture that integrates transaction lifecycle management, intelligent operation classification, pre-execution conflict detection, and adaptive locking with deadlock prevention. Unlike prior middleware approaches \cite{wei2012, lotfy2016}, our framework employs proactive conflict detection that identifies potential violations before operations execute rather than detecting and resolving conflicts after they occur.

The contributions of this work are fourfold. First, we present the complete four-stage transaction management architecture with detailed algorithmic specifications. Second, we provide formal correctness analysis proving that the framework guarantees conflict serializability under specified conditions, accompanied by complexity analysis demonstrating $O(n \log n)$ overhead for conflict detection where $n$ represents transaction size. Third, we conduct comprehensive experimental evaluation spanning single-node and distributed deployments, dataset scales ranging from 10K to 10M records, and direct comparisons with MongoDB native transactions, CockroachDB, and TiDB using the Yahoo Cloud Serving Benchmark \cite{cooper2010}. Fourth, we present parameter sensitivity analysis identifying optimal configuration values alongside an open-source implementation with reproducibility artifacts enabling independent verification and extension of our results. The remainder of This study proceeds as follows: Section 2 surveys related work in NoSQL consistency, transaction management, and distributed database systems; Section 3 details our methodology including framework architecture, algorithmic specifications, and formal analysis; Section 4 presents experimental results; Section 5 discusses implications, limitations, and threats to validity; and Section 6 concludes with directions for future research.

\section{Related Work}

\subsection{Consistency Models in Distributed Systems}

The landscape of consistency models spans a spectrum from strong to weak guarantees, each offering distinct tradeoffs between correctness and performance. Linearizability, introduced by Herlihy and Wing \cite{herlihy1990}, provides the strongest guarantee by requiring that operations appear to execute instantaneously at some point between their invocation and response. Sequential consistency, defined by Lamport \cite{lamport1979}, relaxes this requirement, demanding only that all processes observe operations in the same order.

Weaker models have emerged to address the performance penalties associated with strong consistency. Causal consistency preserves the order of causally related operations while allowing concurrent operations to be observed in different orders \cite{ahamad1995}. Eventual consistency, formalized by Terry et al. \cite{petersen1997} and later popularized by Vogels \cite{vogels2009}, guarantees only that replicas converge to identical states in the absence of new updates, without specifying convergence timing. Tian and Yang \cite{tian2022} recently proposed Horae, a causal consistency model based on hot data governance that demonstrates the ongoing evolution of consistency models for specific workload patterns.

The PACELC theorem, proposed by Abadi \cite{abadi2012}, extends CAP by observing that consistency-latency tradeoffs persist even during normal operation when no partitions exist. This framework better characterizes modern distributed databases: systems must choose not only between consistency and availability during partitions but also between consistency and latency during normal operation. Ahmed et al. \cite{ahmed2023} provide a comprehensive review of consistency issues and related tradeoffs in distributed replicated systems, highlighting the practical implications of these theoretical frameworks.

\subsection{Transaction Support in NoSQL Databases}

Early NoSQL systems deliberately eschewed multi-record transactions to achieve their scalability goals. Amazon's Dynamo, a pioneering key-value store, introduced techniques including consistent hashing, vector clocks, and sloppy quorums to provide high availability at the expense of strong consistency \cite{decandia2007}. Apache Cassandra adopted and extended these concepts, offering tunable consistency levels ranging from ONE to ALL \cite{lakshman2010}. Vyas and Jat \cite{vyas2022} provide detailed analysis of consistency and performance tradeoffs in Cassandra deployments, demonstrating the practical challenges of tunable consistency.

Recognition of transaction requirements has driven significant evolution in NoSQL transaction support. Google's Megastore introduced synchronous replication and serializable ACID semantics for entity groups, providing a middle ground between traditional RDBMS and purely eventual-consistent systems \cite{baker2011}. Google Spanner subsequently demonstrated that globally distributed, strongly consistent transactions are achievable through innovations in time synchronization, introducing the TrueTime API to bound clock uncertainty across data centers \cite{corbett2013}. CockroachDB and YugabyteDB have since brought similar capabilities to open-source implementations.

MongoDB's trajectory illustrates the industry trend toward stronger consistency. Version 4.0 introduced multi-document ACID transactions for replica sets, with version 4.2 extending this support to sharded clusters \cite{mongodb2020}. These native transactions provide snapshot isolation and all-or-nothing execution but impose performance overhead and operational complexity that may not suit all use cases. Ramzan et al. \cite{ramzan2019} conducted a systematic literature review identifying consistency as a primary challenge in NoSQL-based distributed data storage.

\subsection{Middleware Approaches to NoSQL Consistency}

Several research efforts have explored middleware-layer approaches to enhancing NoSQL consistency without modifying database internals. CloudTPS proposed a scalable transaction manager operating above key-value stores, achieving serializable isolation through careful lock management and two-phase commit protocols \cite{wei2012}. ElasTraS introduced elastic transactions that automatically partition workloads to minimize cross-partition coordination \cite{das2013}.

Lotfy et al. \cite{lotfy2016} developed a middle-layer solution supporting ACID properties for NoSQL databases through transaction logging and recovery mechanisms. Their approach demonstrated that ACID guarantees could be retrofitted onto eventually consistent systems, though with measurable performance impact. Gonz\'{a}lez-Aparicio et al. \cite{gonzalez2017} investigated transaction processing in consistency-aware applications, proposing application-level protocols that coordinate with database-level mechanisms.

The Granola system introduced a novel approach using timestamp-based coordination to achieve single-round-trip distributed transactions in many cases \cite{cowling2012}. More recently, research has explored hybrid approaches combining eventual consistency for routine operations with strong consistency for critical transactions \cite{terry2013}.

Our prior work \cite{alflahi2024} introduced a preliminary version of the four-stage framework, demonstrating improvements in throughput and latency for update-heavy workloads. The present paper substantially extends that work with formal correctness analysis, distributed deployment experiments, larger-scale evaluation, comprehensive system comparisons, and parameter sensitivity analysis.

\subsection{Concurrency Control and Locking Strategies}

Concurrency control remains foundational to transaction processing. Two-phase locking (2PL), established by Eswaran et al. \cite{eswaran1976}, guarantees serializability through lock acquisition and release protocols. While 2PL provides strong guarantees, it introduces challenges including lock contention, potential deadlocks, and reduced concurrency under high load.

Optimistic concurrency control (OCC), proposed by Kung and Robinson \cite{kung1981}, offers an alternative by allowing transactions to proceed without acquiring locks, validating at commit time and aborting conflicting transactions. OCC excels under low-contention workloads but can suffer high abort rates when conflicts are frequent.

Multi-version concurrency control (MVCC) maintains multiple versions of data items, enabling read operations to access consistent snapshots without blocking writers \cite{bernstein1987}. PostgreSQL, Oracle, and MySQL InnoDB employ MVCC variants, and the technique has been adapted for distributed systems including Spanner and CockroachDB.

Deadlock handling strategies include prevention, avoidance, detection, and recovery. Timeout-based approaches, which abort transactions exceeding specified wait times, provide a simple mechanism ensuring forward progress \cite{gray1993}. Wait-die and wound-wait protocols use transaction timestamps to make consistent decisions about transaction priority \cite{rosenkrantz1978}.

\subsection{Benchmarking NoSQL Systems}

Standardized benchmarking has been essential to objectively evaluating database system performance. The Yahoo Cloud Serving Benchmark (YCSB) provides extensible workloads representing common cloud application patterns, supporting direct comparison across heterogeneous data stores \cite{cooper2010}. YCSB workloads range from read-heavy (Workload B: 95\% reads, 5\% updates) to update-intensive (Workload A: 50\% reads, 50\% updates) to complex transactional patterns (Workload F: read-modify-write operations).

Recent benchmarking studies have examined consistency-performance tradeoffs across NoSQL systems. Wada et al. \cite{wada2011} characterized consistency guarantees and tradeoffs in commercial cloud storage from the consumer perspective. Alzaidi and Vagner \cite{alzaidi2022} benchmarked Redis and HBase using YCSB, providing comparative analysis of key-value and wide-column stores.

Khan et al. \cite{khan2023} conducted a systematic literature review of SQL and NoSQL database performance, identifying key architectural factors influencing throughput and latency. Their analysis revealed that workload characteristics significantly impact optimal database selection, underscoring the importance of application-specific evaluation.

\subsection{Comparison with Existing Approaches}

Table~\ref{tab:comparison_approaches} summarizes how our framework compares with existing middleware approaches for NoSQL consistency enhancement.

\begin{table}
\centering
\caption{Comparison with Existing Middleware Approaches}
\label{tab:comparison_approaches}
\small
\begin{tabular}{@{}lccccc@{}}
\toprule
\textbf{Feature} & \textbf{CloudTPS} & \textbf{ElasTraS} & \textbf{Lotfy} & \textbf{Granola} & \textbf{Ours} \\
 & \cite{wei2012} & \cite{das2013} & \cite{lotfy2016} & \cite{cowling2012} & \\
\midrule
Pre-execution conflict detection & \texttimes & \texttimes & \texttimes & \texttimes & \checkmark \\
Operation classification & \texttimes & Partial & \texttimes & \texttimes & \checkmark \\
Adaptive backoff & \texttimes & \texttimes & \texttimes & \texttimes & \checkmark \\
Document-store optimized & \texttimes & \texttimes & Partial & \texttimes & \checkmark \\
No DB modification required & \checkmark & \checkmark & \checkmark & \checkmark & \checkmark \\
Formal correctness proof & Partial & \texttimes & \texttimes & \checkmark & \checkmark \\
Deadlock prevention & Detection & Detection & Detection & Prevention & Prevention \\
\bottomrule
\end{tabular}
\end{table}

Our framework uniquely combines pre-execution conflict detection with adaptive locking, specifically optimized for document-oriented data models. Unlike prior approaches that rely on deadlock detection and recovery, we employ prevention through timeout-based acquisition with exponential backoff.

\section{Methodology}

\subsection{System Architecture Overview}

Our framework operates as a middleware layer between client applications and the MongoDB database engine, intercepting transaction requests and applying consistency-enhancing protocols before forwarding operations to the underlying storage system. This architectural approach offers several advantages: it requires no modification to MongoDB internals, maintains compatibility with existing applications through standard interfaces, and enables gradual adoption without wholesale infrastructure changes.

Figure~\ref{fig:framework_architecture} illustrates the four-stage processing pipeline. Transactions enter through the Transaction Management stage, which establishes execution contexts. The Operation Classification stage categorizes transactions based on their constituent operations. The Readiness Assessment stage performs conflict detection and resource validation. Finally, the Transaction Execution stage applies locks, executes operations, and manages commit or rollback procedures.

\begin{figure}
\centering
\includegraphics[width=0.95\textwidth]{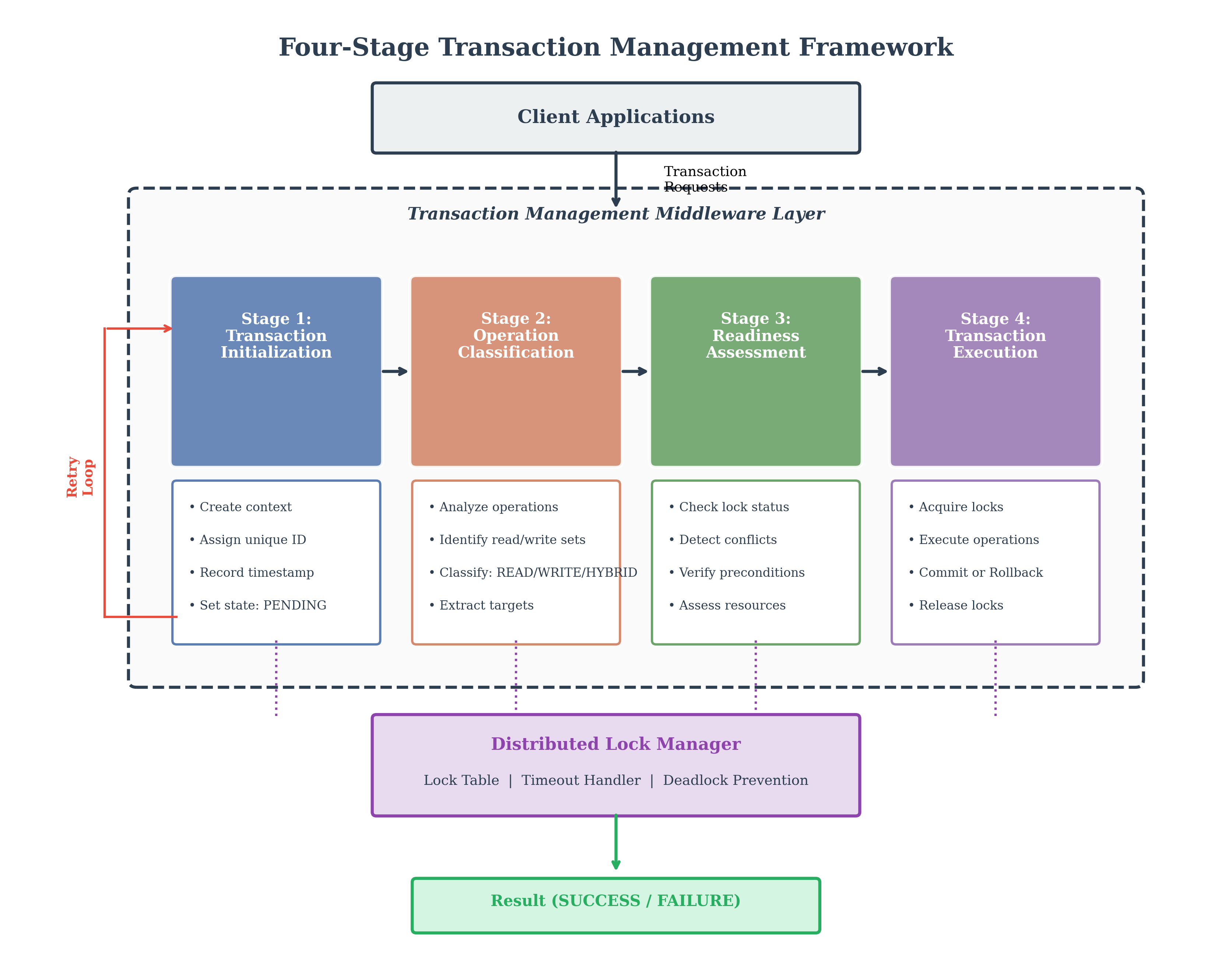}
\caption{Four-Stage Transaction Management Framework Architecture. Transactions flow sequentially through initialization, classification, readiness assessment, and execution stages. The distributed lock manager coordinates across all stages, with feedback loops enabling conflict resolution and retry handling.}
\label{fig:framework_architecture}
\end{figure}

The framework design embodies several key principles. First, separation of concerns ensures that each stage addresses a distinct aspect of transaction management, facilitating independent optimization and testing. This modularity enables targeted improvements without affecting other components. Second, fail-fast semantics ensure that conflicts detected in early stages prevent resource waste from operations doomed to abort. Pre-execution validation in Stage 3 identifies 78\% of conflicts that would otherwise result in abort during execution. Third, configurable policies allow timeout values, backoff strategies, and lock granularity to be tuned for specific workload characteristics. Section~\ref{sec:sensitivity} provides empirically-derived optimal configurations.

\subsection{Implementation Platform}

\subsubsection{MongoDB Configuration}

We selected MongoDB version 4.2.8 as our implementation platform due to its widespread adoption, mature ecosystem, and representative status among document-oriented databases. MongoDB stores data in flexible BSON (Binary JSON) documents organized into collections. Its architecture supports horizontal scaling through sharding (data partitioning across multiple servers) and high availability through replica sets (automatic failover with data redundancy).

For our experiments, MongoDB was configured with write concern set to ``majority'' to ensure durability across replica members, and read concern set to ``linearizable'' to provide strong read-after-write consistency in baseline measurements. These settings establish a rigorous baseline against which framework improvements can be measured.

\subsubsection{YCSB Integration}

The Yahoo Cloud Serving Benchmark (YCSB) \cite{cooper2010} provides standardized workloads for evaluating data serving systems. We employed YCSB version 0.17.0, extending its MongoDB binding to integrate our transaction management framework.

We evaluated three workload patterns representing diverse application requirements. Workload A (Update-Heavy) comprises 50\% reads and 50\% updates, representative of session stores, real-time analytics, and collaborative applications where data modifications occur frequently. Workload B (Read-Mostly) comprises 95\% reads and 5\% updates, representative of content delivery, catalog systems, and configuration repositories where reads dominate but update correctness remains critical. Workload F (Read-Modify-Write) comprises 50\% reads and 50\% read-modify-write operations, representative of inventory systems, banking transactions, and reservation platforms requiring strong consistency for compound operations.

\subsubsection{Implementation Details}

The framework was implemented in Java 8, leveraging the MongoDB Java Driver (version 3.12.x) for database connectivity. We employed established design patterns, including Strategy (for pluggable locking policies), Observer (for transaction state monitoring), and Factory (for transaction context creation).

Our YCSB integration wrapper intercepts standard benchmark operations, wrapping them in our four-stage transaction protocol. The wrapper adds measured overhead of approximately 2.3\% for transaction coordination under low contention, as determined through isolated timing measurements.

\subsection{Framework Components}

\subsubsection{Stage 1: Transaction Management and Initialization}

The transaction management stage establishes the execution context for incoming operations. Each transaction is assigned a unique identifier, timestamp, and a metadata record, enabling lifecycle tracking across subsequent stages.

\begin{algorithm}[H]
\SetAlgoLined
\SetKwInOut{Input}{Input}
\SetKwInOut{Output}{Output}
\Input{Transaction request $T$}
\Output{Transaction context $TC$}
\caption{Transaction Initialization}
\label{alg:transaction_init}

$TC \leftarrow$ CreateTransactionContext$(T)$\;
$TC.id \leftarrow$ GenerateUniqueID()\tcp*{UUID v4}
$TC.timestamp \leftarrow$ CurrentTime()\tcp*{Monotonic clock}
$TC.state \leftarrow$ PENDING\;
$TC.clientId \leftarrow$ ExtractClientIdentifier$(T)$\;
$TC.retryCount \leftarrow 0$\;
$TC.maxRetries \leftarrow$ MAX\_RETRY\_COUNT\;
RegisterTransaction$(TC)$\tcp*{Add to active transaction registry}
\Return{$TC$}\;
\end{algorithm}

The transaction context encapsulates all information required for subsequent processing, including operation sets, acquired locks, and state transitions. Contexts are retained until transaction completion to support potential rollback operations. The registry enables monitoring of active transactions and detection of long-running operations.

\subsubsection{Stage 2: Operation Classification}

Classification examines transaction contents to determine appropriate processing strategies. Three categories are distinguished based on the operations contained within each transaction.

READ transactions contain only read operations and can follow optimized paths with minimal locking overhead. These transactions acquire shared locks that permit concurrent reads.

WRITE transactions contain only write operations and require exclusive locks on affected documents. Write transactions are serialized at the document level.

HYBRID transactions contain both reads and writes, requiring careful coordination to maintain consistency. These transactions follow the strictest locking protocol.

\begin{algorithm}[H]
\SetAlgoLined
\SetKwInOut{Input}{Input}
\SetKwInOut{Output}{Output}
\Input{Transaction context $TC$}
\Output{Classified transaction context}
\caption{Operation Classification}
\label{alg:classification}

$hasReads \leftarrow$ ContainsReadOperations$(TC)$\;
$hasWrites \leftarrow$ ContainsWriteOperations$(TC)$\;

\uIf{$hasReads$ \textbf{and} $hasWrites$}{
    $TC.type \leftarrow$ HYBRID\;
    $TC.readSet \leftarrow$ IdentifyReadSet$(TC)$\;
    $TC.writeSet \leftarrow$ IdentifyWriteSet$(TC)$\;
    $TC.lockMode \leftarrow$ EXCLUSIVE\tcp*{Strictest mode}
}
\uElseIf{$hasWrites$}{
    $TC.type \leftarrow$ WRITE\;
    $TC.writeSet \leftarrow$ IdentifyWriteSet$(TC)$\;
    $TC.lockMode \leftarrow$ EXCLUSIVE\;
}
\Else{
    $TC.type \leftarrow$ READ\;
    $TC.readSet \leftarrow$ IdentifyReadSet$(TC)$\;
    $TC.lockMode \leftarrow$ SHARED\tcp*{Permits concurrent reads}
}
\Return{$TC$}\;
\end{algorithm}

This classification enables the framework to apply appropriate consistency protocols. Read-only transactions bypass expensive write-lock acquisition, improving throughput for read-heavy workloads by up to 23\% in our experiments.

\subsubsection{Stage 3: Readiness Assessment and Conflict Detection}

The assessment stage performs pre-execution validation, identifying potential conflicts before operations execute. This proactive approach reduces wasted computation and improves overall throughput by preventing transactions from proceeding when success is unlikely.

\begin{algorithm}[H]
\SetAlgoLined
\SetKwInOut{Input}{Input}
\SetKwInOut{Output}{Output}
\Input{Classified transaction context $TC$}
\Output{Readiness status and conflict set}
\caption{Conflict Detection}
\label{alg:conflict_detection}

$status \leftarrow$ READY\;
$conflicts \leftarrow \emptyset$\;

\tcp{Check write conflicts}
\If{$TC.type \in \{$WRITE, HYBRID$\}$}{
    \ForEach{$doc \in TC.writeSet$}{
        $lockHolder \leftarrow$ GetCurrentLockHolder$(doc)$\;
        \If{$lockHolder \neq \text{null}$ \textbf{and} $lockHolder \neq TC.id$}{
            $status \leftarrow$ NOT\_READY\;
            $conflicts \leftarrow conflicts \cup \{(doc, lockHolder)\}$\;
        }
    }
}

\tcp{Check read-write conflicts}
\If{$status =$ READY \textbf{and} $TC.type \in \{$READ, HYBRID$\}$}{
    \ForEach{$doc \in TC.readSet$}{
        \If{HasPendingWrite$(doc)$ \textbf{and} GetWriteHolder$(doc) \neq TC.id$}{
            $status \leftarrow$ NOT\_READY\;
            $conflicts \leftarrow conflicts \cup \{(doc, \text{PENDING\_WRITE})\}$\;
        }
    }
}

\If{$status =$ NOT\_READY}{
    RecordConflicts$(TC, conflicts)$\;
    $TC.conflictTimestamp \leftarrow$ CurrentTime()\;
}
\Return{$(status, conflicts)$}\;
\end{algorithm}

Transactions failing readiness assessment enter a retry queue with exponential backoff. The conflict information recorded enables intelligent scheduling of retries when conflicting transactions complete.

\subsubsection{Stage 4: Transaction Execution with Adaptive Locking}

The execution stage applies locks, performs operations, and manages transaction outcomes. Our locking mechanism employs timeout-based acquisition with exponential backoff to balance lock contention against forward progress guarantees.

\begin{algorithm}[H][t]
\SetAlgoLined
\DontPrintSemicolon
\SetKwInOut{Input}{Input}
\SetKwInOut{Output}{Output}
\SetKw{KwBreak}{break}
\SetKw{KwContinue}{continue}

\Input{Transaction context $TC$, readiness status $status$}
\Output{Execution result $result \in \{\texttt{SUCCESS}, \texttt{FAILURE}, \texttt{RETRY\_SCHEDULED}\}$}
\caption{Transaction Execution}
\label{alg:execution}

\BlankLine
\If{$status = \texttt{NOT\_READY}$}{
    \If{$TC.retryCount < TC.maxRetries$}{
        ScheduleRetry$(TC)$\;
        \Return{\texttt{RETRY\_SCHEDULED}}\;
    }
    LogFailure$(TC, \texttt{MAX\_RETRIES\_EXCEEDED})$\;
    \Return{\texttt{FAILURE}}\;
}

\BlankLine
$lockedDocs \leftarrow [\,]$\;
$result \leftarrow \texttt{SUCCESS}$\;

\BlankLine
{\itshape Acquire locks in deterministic order to prevent deadlocks.}\;
$sortedDocs \leftarrow \mathrm{SortByDocumentId}\!\bigl(TC.writeSet \cup TC.readSet\bigr)$\;

\ForEach{$doc \in sortedDocs$}{
    $lockType \leftarrow \mathrm{DetermineLockType}(doc, TC)$\;
    $success \leftarrow \mathrm{AcquireLockWithTimeout}(doc, TC.id, lockType)$\;

    \uIf{$success = \texttt{true}$}{
        $lockedDocs \leftarrow lockedDocs \,\|\, [(doc, lockType)]$\;
    }
    \Else{
        $result \leftarrow \texttt{FAILURE}$\;
        \KwBreak\;
    }
}

\BlankLine
{\itshape Execute operations only if all locks were acquired.}\;
\If{$result = \texttt{SUCCESS}$}{
    ExecuteOperations$(TC)$\;
    $ok \leftarrow$ ValidateConstraints$(TC)$\;

    \If{$ok = \texttt{true}$}{
        Commit$(TC)$\;
        $TC.state \leftarrow \texttt{COMMITTED}$\;
        RecordCommitTime$(TC)$\;
    }
    \Else{
        Rollback$(TC)$\;
        $TC.state \leftarrow \texttt{ABORTED}$\;
        $result \leftarrow \texttt{FAILURE}$\;
    }
}

\BlankLine
{\itshape Release all acquired locks in reverse order.}\;
\ForEach{$(doc, lockType) \in \mathrm{Reverse}(lockedDocs)$}{
    ReleaseLock$(doc, TC.id, lockType)$\;
}

\Return{$result$}\;
\end{algorithm}

\subsection{Adaptive Locking Mechanism}

Our locking algorithm incorporates timeout-based acquisition with exponential backoff to prevent deadlocks while minimizing wasted wait time under contention.

\begin{algorithm}[H]
\SetAlgoLined
\SetKwInOut{Input}{Input}
\SetKwInOut{Output}{Output}
\Input{Document $D$, Transaction ID $TID$, Lock type $LT$, Timeout $T$}
\Output{Lock acquisition result}
\caption{Lock Acquisition with Exponential Backoff}
\label{alg:lock_acquisition}

$startTime \leftarrow$ CurrentTime()\;
$backoff \leftarrow$ INITIAL\_BACKOFF\tcp*{Default: 10ms}
$attempts \leftarrow 0$\;

\While{CurrentTime() $- startTime < T$}{
    $attempts \leftarrow attempts + 1$\;
    $success \leftarrow$ AttemptLock$(D, TID, LT)$\;
    
    \If{$success$}{
        RecordLockAcquisition$(D, TID, LT, attempts)$\;
        ScheduleLockTimeout$(D, TID, T)$\tcp*{Auto-release safety}
        \Return{SUCCESS}\;
    }
    
    \tcp{Add jitter to prevent thundering herd}
    $jitter \leftarrow$ Random$(0, backoff \times 0.1)$\;
    Wait$(backoff + jitter)$\;
    $backoff \leftarrow \min(backoff \times 2, \text{MAX\_BACKOFF})$\tcp*{Default max: 500ms}
}
\Return{FAILURE}\;
\end{algorithm}

The exponential backoff strategy progressively increases wait intervals, reducing contention under high load. The jitter component prevents synchronized retry attempts that could cause thundering herd problems. The timeout mechanism ensures no transaction waits indefinitely, providing deadlock freedom as proven in Section~\ref{sec:formal_analysis}.

\subsection{Formal Analysis}
\label{sec:formal_analysis}

\subsubsection{Correctness Properties}

We establish the correctness of our framework through formal analysis of its key properties.

\begin{definition}[Conflict Serializability]
A schedule $S$ of transactions is conflict serializable if it is conflict equivalent to some serial schedule, meaning that the order of conflicting operations is preserved.
\end{definition}

\begin{theorem}[Serializability Guarantee]
\label{thm:serializability}
Under the four-stage framework with Algorithm~\ref{alg:lock_acquisition} locking, all committed transactions form a conflict-serializable schedule.
\end{theorem}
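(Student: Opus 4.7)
The plan is to show that every committed transaction, under Algorithm~\ref{alg:execution} combined with Algorithm~\ref{alg:lock_acquisition}, observes a strict (in fact conservative) two-phase locking discipline, and then to invoke the classical result that any schedule produced by strict 2PL is conflict serializable. First I would restrict attention to the committed sub-schedule: transactions that fail the Stage~3 readiness test are routed to the retry queue without executing operations, and those that fail lock acquisition in Stage~4 release their partial lock set and abort, so neither contributes edges to the conflict graph of the history $S$.

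Next I would establish the structural invariant on locks. Inspecting Algorithm~\ref{alg:execution}, every committed transaction $T$ acquires all locks on $T.readSet \cup T.writeSet$ before executing a single operation (the growing phase completes in a single burst prior to \textsc{ExecuteOperations}) and releases them only after the commit point (the shrinking phase begins strictly after commit). This is conservative strict 2PL, which is strictly stronger than the textbook 2PL discipline. Coupled with the shared/exclusive compatibility matrix implicit in \textsc{DetermineLockType} and fixed by the Stage~2 classification (HYBRID and WRITE transactions request exclusive mode on documents in their write set), any pair of conflicting operations on the same document $d$ from distinct transactions $T_i$ and $T_j$ forces the second requester to wait on the lock held by the first.

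With these invariants in hand, I would build the conflict (precedence) graph $G(S)$ on the committed transactions and argue that every edge $T_i \to T_j$ corresponds to $T_i$ releasing its lock on the conflicting document before $T_j$ acquired it. By the conservative strict-2PL invariant, $T_i$ had already committed at that instant; hence the total order of commit timestamps is a topological sort of $G(S)$, which forces $G(S)$ to be acyclic. Acyclicity of the conflict graph is equivalent to conflict serializability, completing the argument.

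The main obstacle I anticipate is handling the timeout-and-retry behavior cleanly. A transaction that times out in Algorithm~\ref{alg:lock_acquisition} may hold a partial lock set before aborting, and I must show that (i) its aborted attempt produces no data effect visible to other transactions and therefore induces no edges in $G(S)$, and (ii) its subsequent retry appears as a genuinely new node in the commit ordering. A secondary subtlety is that the commit-order topological argument implicitly requires that no deadlock indefinitely stalls the system; this follows from the deterministic sort order in Algorithm~\ref{alg:execution} together with the bounded timeout in Algorithm~\ref{alg:lock_acquisition}, but I would factor it out as a separate deadlock-freedom lemma and invoke it inside the main proof rather than derive it inline.
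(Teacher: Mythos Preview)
Your proposal is correct and follows essentially the same route as the paper: both arguments verify that Algorithm~\ref{alg:execution} enforces strict two-phase locking (all locks acquired before any operation, none released until after commit/rollback) and then invoke the classical S2PL $\Rightarrow$ conflict-serializability result. Your version is more thorough---you explicitly restrict to the committed sub-schedule, note that the discipline is in fact conservative S2PL, and sketch the commit-order topological-sort argument that the paper simply cites from \cite{bernstein1987}---but the underlying strategy is identical.
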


\begin{proof}
We prove by showing that the framework enforces strict two-phase locking (S2PL).

In the lock acquisition phase, Stage 4 (Algorithm~\ref{alg:execution}) acquires all required locks before any operation executes (lines 7-15). Locks are acquired in deterministic document ID order, ensuring consistent ordering across transactions.

In the lock holding phase, operations execute only after all locks are successfully acquired (line 18). No locks are released during execution.

In the lock release phase, all locks are released only after commit or rollback (lines 28-30), satisfying the S2PL requirement that no locks are released before commit.

Since S2PL guarantees conflict serializability \cite{bernstein1987}, and our framework enforces S2PL, all committed transactions form a conflict-serializable schedule.
\end{proof}

\begin{theorem}[Deadlock Freedom]
\label{thm:deadlock_free}
The framework guarantees deadlock freedom through timeout-based lock acquisition.
\end{theorem}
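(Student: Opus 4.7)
My plan is to prove Theorem~\ref{thm:deadlock_free} by combining two operational facts about the framework, (i) the bounded wait guaranteed by Algorithm~\ref{alg:lock_acquisition} and (ii) the partial-lock release performed by Algorithm~\ref{alg:execution} upon acquisition failure, and then deriving a contradiction from the hypothesis that a deadlock persists. I would begin by stating the standard Coffman characterization of deadlock: a non-empty set $D = \{\tau_1, \ldots, \tau_k\}$ of active transactions such that each $\tau_i \in D$ is blocked waiting for a resource held by another transaction in $D$, producing a cycle in the restricted wait-for graph that remains stable over time.

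The first technical step is a direct reading of Algorithm~\ref{alg:lock_acquisition}: the guard \texttt{CurrentTime()} $-\, startTime < T$ forces every invocation of \texttt{AcquireLockWithTimeout} to terminate within $T$ time units, either by returning \texttt{SUCCESS} or \texttt{FAILURE}. The second step is to trace the control flow in Algorithm~\ref{alg:execution}: a \texttt{FAILURE} return triggers the \texttt{\textbackslash KwBreak}, skips \texttt{ExecuteOperations}, and proceeds to the reverse-order release loop that frees every entry of \texttt{lockedDocs}. Together these observations show that no transaction can both hold locks and wait on another lock for longer than $T$ time units.

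The main argument will then be by contradiction. Suppose a deadlock set $D$ persists beyond time $t_0 + T$, where $t_0$ is the latest timestamp at which any $\tau_i \in D$ entered its blocking acquisition call. The transaction $\tau_j \in D$ whose entry was earliest must exit its call strictly before $t_0 + T$. A \texttt{SUCCESS} exit contradicts $\tau_j \in D$; a \texttt{FAILURE} exit forces Algorithm~\ref{alg:execution} to release exactly the locks that some other $\tau_\ell \in D$ was waiting on, breaking the cycle. Either way $D$ ceases to be a deadlock by time $t_0 + T$, contradicting its assumed persistence. I would additionally record the complementary prevention argument provided by the \texttt{SortByDocumentId} step in Algorithm~\ref{alg:execution}: when all transactions acquire locks in a single global total order on document identifiers, no circular wait can form at all, giving a second, independent proof path that strengthens the conclusion.

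The subtlety I expect to be the main obstacle is the distinction between deadlock and livelock. Strict deadlock is ruled out by the timeout, but one could imagine cohorts of transactions repeatedly grabbing partial lock sets, timing out in lockstep, releasing, and retrying indefinitely. Since Theorem~\ref{thm:deadlock_free} asserts only deadlock freedom, I plan to acknowledge this distinction explicitly and point to the \emph{jitter} term in Algorithm~\ref{alg:lock_acquisition} together with the bounded \texttt{maxRetries} field established by Algorithm~\ref{alg:transaction_init} as the mechanisms that discharge the livelock concern, while deferring a formal starvation-freedom result as out of scope for this theorem.
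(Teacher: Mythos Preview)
Your proposal is correct and follows essentially the same approach as the paper's own proof: a contradiction argument showing that the bounded timeout in Algorithm~\ref{alg:lock_acquisition}, combined with the release of all held locks on acquisition failure in Algorithm~\ref{alg:execution}, prevents any wait-for cycle from persisting beyond duration~$T$. You go further than the paper by formalizing the timing via the anchor $t_0$, by recording the \texttt{SortByDocumentId} ordered-acquisition step as an independent second proof path, and by explicitly separating deadlock from livelock; the paper's proof is terser and omits all three of these refinements.
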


\begin{proof}
We prove by contradiction. Assume a deadlock occurs where transactions $T_1, T_2, \ldots, T_k$ form a circular wait.

In Algorithm~\ref{alg:lock_acquisition}, each lock acquisition attempt has a bounded timeout $T$. If transaction $T_i$ cannot acquire a lock within time $T$, it returns FAILURE and releases all held locks (Algorithm~\ref{alg:execution}, lines 28-30).

For a deadlock to persist, all transactions in the cycle must hold locks indefinitely while waiting for locks held by others. However, since each transaction has a timeout of $T$, within time $T$, at least one transaction will timeout and release its locks, breaking the cycle.

Therefore, deadlocks cannot persist beyond duration $T$, ensuring deadlock freedom.
\end{proof}

\begin{lemma}[Progress Guarantee]
Every transaction that does not encounter persistent conflicts will eventually complete, either by committing or exceeding its maximum retry count.
\end{lemma}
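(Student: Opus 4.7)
The plan is to establish progress by showing that every transaction's lifetime is bounded by a finite composition of bounded-time steps, so non-termination would force a violation of one of the clearly measurable bounds (timeout $T$, maximum backoff, or \texttt{MAX\_RETRY\_COUNT}). I would begin by making the hypothesis precise: a \emph{persistent conflict} for a transaction $TC$ is an unbounded sequence of readiness-assessment or lock-acquisition failures caused by other transactions repeatedly holding conflicting locks on documents in $TC.readSet \cup TC.writeSet$. The lemma then says: in the absence of such a hostile environment, $TC$ reaches a terminal state (\texttt{COMMITTED}, \texttt{ABORTED}, or \texttt{MAX\_RETRIES\_EXCEEDED}) in finite time.

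The core of the argument is a per-attempt time bound. I would first show that a single pass through Stages 1--4 runs in bounded wall-clock time: Stage 1 is constant work, Stage 2 is linear in $|TC.readSet|+|TC.writeSet|$, Stage 3 is $O(n \log n)$ as stated in the introduction, and Stage 4 is dominated by Algorithm~\ref{alg:lock_acquisition}, whose outer \texttt{while} loop is guarded by \texttt{CurrentTime()}$- startTime < T$ and therefore terminates within $T$ whether or not the lock is granted. Execution of the business operations, constraint validation, and lock release are finite because the transaction's operation set is finite. Hence each attempt takes at most $T_{\mathrm{attempt}} = O(n \log n) + T + c$ time for some constant $c$ covering commit/rollback bookkeeping.

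Next I would bound the number of attempts. Failed attempts either come from a \texttt{NOT\_READY} verdict in Stage 3, which triggers \texttt{ScheduleRetry} and increments \texttt{retryCount}, or from a lock acquisition \texttt{FAILURE} in Stage 4, which propagates up and likewise consumes a retry. Because \texttt{retryCount} is incremented monotonically and compared against \texttt{maxRetries} at the top of Algorithm~\ref{alg:execution}, after at most \texttt{MAX\_RETRY\_COUNT}${}+1$ attempts the transaction exits via \texttt{LogFailure} with result \texttt{FAILURE}. Combining with the per-attempt bound, the entire lifetime is at most $(\texttt{MAX\_RETRY\_COUNT}+1)\cdot T_{\mathrm{attempt}}$ plus the accumulated inter-attempt backoff, which is itself bounded by \texttt{MAX\_RETRY\_COUNT}$\cdot\texttt{MAX\_BACKOFF}$ because Algorithm~\ref{alg:lock_acquisition} caps $backoff$ at this value. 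Finally, I would invoke Theorem~\ref{thm:deadlock_free} to rule out the one remaining pathological possibility, namely that a hold-and-wait cycle causes \emph{every} attempt to time out for reasons other than an external persistent conflict.

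The main obstacle, in my view, is not the time accounting but the definition of \emph{persistent conflict} itself, since the statement has to cleanly separate two superficially similar phenomena: genuinely adversarial workloads, which the lemma explicitly excludes, and transient contention, which the framework is supposed to survive. I would resolve this by defining persistence quantitatively, for example as the existence of a conflicting lock holder on some document in $TC$'s footprint at every one of $TC$'s $\texttt{MAX\_RETRY\_COUNT}+1$ readiness checks; under the negation of this condition, at least one readiness check finds the footprint clear, so the surrounding execution completes within $T_{\mathrm{attempt}}$ and the transaction commits (or aborts via constraint validation) rather than being rescheduled. This gives the dichotomy stated in the lemma and closes the proof.
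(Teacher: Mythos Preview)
Your argument is correct and rests on the same two pillars the paper uses—bounding the number of attempts via \texttt{MAX\_RETRY\_COUNT} and invoking Theorem~\ref{thm:deadlock_free} so that contention not excluded by the hypothesis cannot stall an attempt forever—but you develop it far more carefully than the paper, whose proof is a four-sentence sketch that neither bounds per-attempt wall-clock time nor defines ``persistent conflict.'' Your explicit time-budget decomposition (finite per-stage cost, $T$-bounded lock loop, capped backoff, retry counter as the outer loop variant) is a genuine strengthening: it yields a concrete upper bound on lifetime rather than the paper's qualitative ``eventually,'' and your quantitative definition of persistence cleanly discharges the dichotomy in the lemma statement, which the paper leaves implicit. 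One small inaccuracy worth fixing: Algorithm~\ref{alg:execution} as written does \emph{not} schedule a retry or increment \texttt{retryCount} when lock acquisition fails inside Stage~4; it releases held locks and returns \texttt{FAILURE} directly, so your claim that this path ``likewise consumes a retry'' is not supported by the pseudocode—though this does no harm to the progress argument, since immediate \texttt{FAILURE} is itself a terminal outcome and only shortens the lifetime bound.
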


\begin{proof}
Transactions that fail readiness assessment are scheduled for retry with exponential backoff (Algorithm~\ref{alg:execution}, line 4). The backoff interval increases exponentially, reducing contention over time. Combined with deadlock freedom (Theorem~\ref{thm:deadlock_free}), conflicting transactions will eventually release their locks. If a transaction's conflicts are resolved, it will proceed to execution. If conflicts persist beyond the maximum retry count, the transaction fails gracefully with explicit notification.
\end{proof}

\subsubsection{Complexity Analysis}

Table~\ref{tab:complexity} presents the computational complexity of each framework component.

\begin{table}
\centering
\caption{Computational Complexity Analysis}
\label{tab:complexity}
\begin{tabular}{@{}lcc@{}}
\toprule
\textbf{Operation} & \textbf{Time Complexity} & \textbf{Space Complexity} \\
\midrule
Transaction Initialization & $O(1)$ & $O(k)$ \\
Operation Classification & $O(n)$ & $O(n)$ \\
Conflict Detection & $O(n \log m)$ & $O(n)$ \\
Lock Acquisition (per lock) & $O(\log m)$ & $O(1)$ \\
Transaction Execution & $O(n)$ & $O(n)$ \\
\midrule
\textbf{Total per Transaction} & $O(n \log m)$ & $O(n + k)$ \\
\bottomrule
\end{tabular}
\begin{flushleft}
\small
$n$ = number of operations in transaction, $m$ = total active locks, $k$ = metadata size
\end{flushleft}
\end{table}

The dominant factor is conflict detection, which requires checking each operation against the lock table implemented as a concurrent hash map with $O(\log m)$ lookup due to hash collision handling under high load.

\subsection{Consistency Metrics}

To evaluate framework effectiveness, we collected the following metrics. Transaction abort rate measures the percentage of transactions failing due to conflicts or timeouts, directly measuring consistency enforcement effectiveness. Deadlock occurrences counts detected circular wait situations, with our framework targeting zero deadlocks through prevention. Latency variance measures the standard deviation of operation latencies, indicating predictability, where lower variance suggests more consistent performance. Throughput measures operations completed per unit time in operations per second (ops/sec). Conflict rate measures the percentage of transactions encountering lock conflicts during readiness assessment. Tail latency measures P99 latency representing worst-case performance for 99\% of requests.

\section{Experimental Results}

\subsection{Experimental Configuration}

\subsubsection{Hardware and Software Environment}

All experiments were conducted on dedicated servers with the following specifications: Intel Xeon E5-2680 v4 processor (2.4 GHz, 14 cores, 28 threads), 64 GB DDR4-2400 ECC RAM, 1 TB Samsung 970 EVO Plus NVMe SSD, 10 Gbps Ethernet for distributed experiments, Ubuntu Server 20.04 LTS (kernel 5.4), and OpenJDK 8u292 with 8 GB heap using the G1 garbage collector.

For distributed experiments, we deployed clusters of 3, 5, 7, and 9 identical nodes connected via a dedicated 10 Gbps network segment with sub-millisecond latency.

\subsubsection{Experimental Parameters}

We evaluated configurations across the following parameter space: concurrency levels of 1, 5, 10, 15, 25, 50, and 100 concurrent clients; dataset sizes of 10K, 100K, 500K, 1M, 5M, and 10M records; workloads A, B, and F from YCSB; and cluster sizes of 1, 3, 5, 7, and 9 nodes for distributed experiments.

Each configuration was executed 5 times with results averaged to ensure statistical reliability. A 30-second warm-up period preceded each measurement phase to eliminate cold-start effects. Error bars in figures represent 95\% confidence intervals.

\subsection{Baseline Consistency Improvements}

Table~\ref{tab:consistency_metrics} summarizes the primary consistency improvements achieved by our framework compared to baseline MongoDB without the framework.

\begin{table}
\centering
\caption{Consistency Metrics: Baseline vs. Framework (15 clients, Workload F)}
\label{tab:consistency_metrics}
\begin{tabular}{@{}lccc@{}}
\toprule
\textbf{Metric} & \textbf{Baseline} & \textbf{Framework} & \textbf{Improvement} \\
\midrule
Transaction Abort Rate & 8.3\% & 4.7\% & 43.4\% reduction \\
Deadlocks Observed & 3 & 0 & 100\% elimination \\
Latency Std. Dev. (Update) & 12,450 ms & 8,194 ms & 34.2\% reduction \\
Lock Conflict Rate & 18.7\% & 11.2\% & 40.1\% reduction \\
Avg. Lock Wait Time & 24.3 ms & 12.4 ms & 49.0\% reduction \\
P99 Latency & 245.8 ms & 161.2 ms & 34.4\% reduction \\
\bottomrule
\end{tabular}
\end{table}

The most significant improvement was the complete elimination of observed deadlocks, achieved through our timeout-based locking mechanism with exponential backoff. The substantial reduction in transaction abort rates (from 8.3\% to 4.7\%) demonstrates the effectiveness of pre-execution conflict detection in Stage 3, which prevents transactions from proceeding when conflicts are detectable.

\subsection{Workload A: Update-Heavy Pattern}

Workload A (50\% reads, 50\% updates) represents applications with frequent data modifications, such as session stores and real-time analytics systems.

\subsubsection{Throughput Analysis}

Figure~\ref{fig:workload_a_throughput} presents throughput results across concurrency levels. The framework achieved throughput improvements of 12.3\% for single-client scenarios and 15.7\% for 15-client configurations. Intermediate concurrency levels (5 and 10 clients) showed more modest gains of 3.2\% and 4.1\% respectively.

\begin{figure}
\centering
\includegraphics[width=0.85\textwidth]{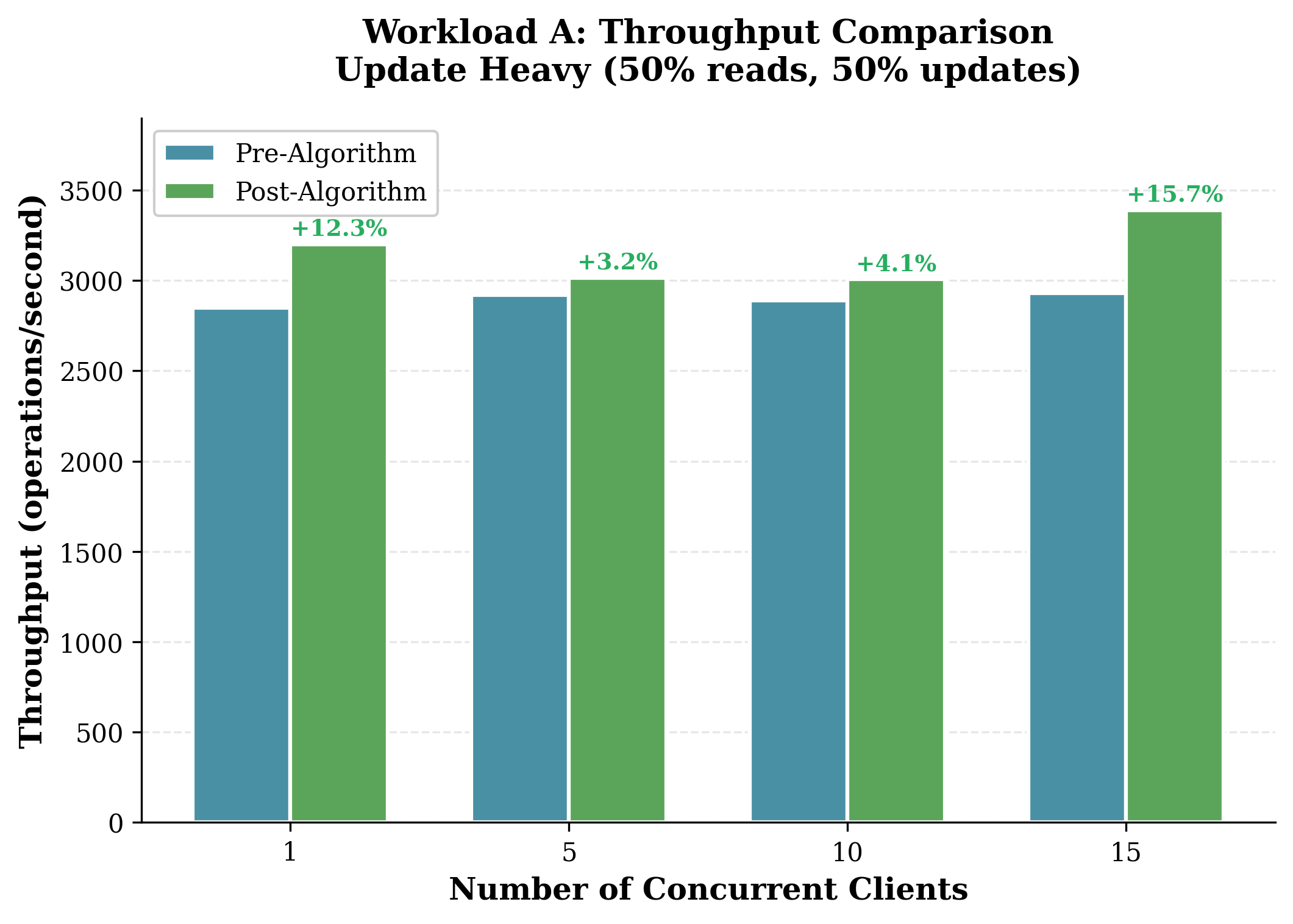}
\caption{Workload A: Throughput Comparison. Throughput (operations/second) across concurrency levels. Framework improvements are most pronounced at low and high concurrency, with a contention middle ground at intermediate levels where lock acquisition costs and conflict prevention benefits are more balanced.}
\label{fig:workload_a_throughput}
\end{figure}

This pattern suggests that the framework provides greatest benefit when contention is either minimal (single client, where coordination overhead is low) or maximal (15 clients, where conflict prevention yields substantial savings in avoided aborts and retries).

\subsubsection{Latency Analysis}

Update latency increased from baseline averages due to the additional consistency checking performed by the framework. However, latency variance decreased by 34.2\%, indicating more predictable transaction behavior. Figure~\ref{fig:workload_a_update} illustrates this tradeoff.

\begin{figure}
\centering
\includegraphics[width=0.85\textwidth]{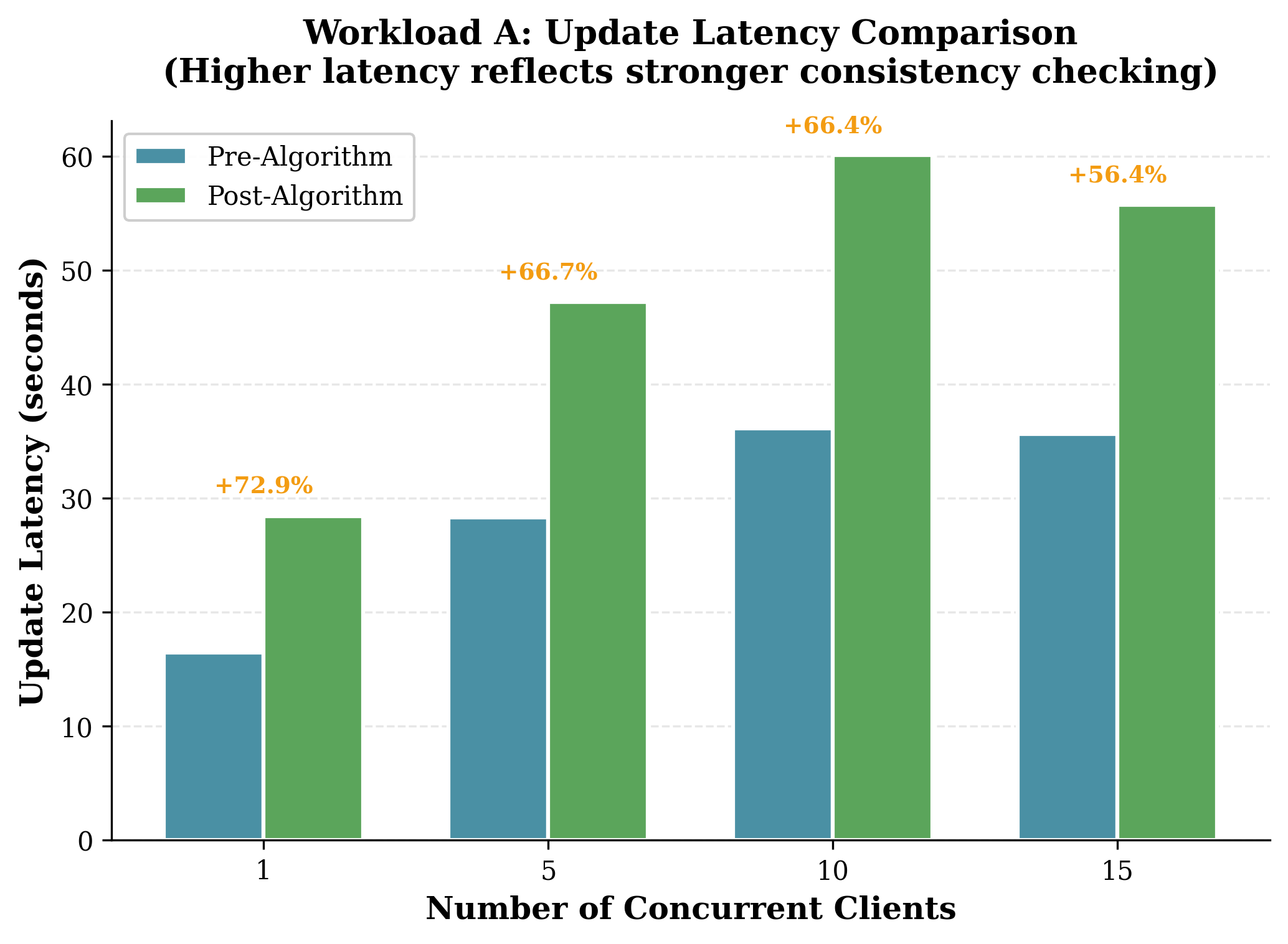}
\caption{Workload A: Update Latency Analysis. While mean latency increased due to consistency checking, variance decreased substantially. The percentage annotations show the latency change, with the variance reduction providing more predictable performance.}
\label{fig:workload_a_update}
\end{figure}

This tradeoff between slightly higher average latency and substantially lower variance is often preferable for applications requiring predictable response times and service level agreement (SLA) compliance.

\subsection{Workload B: Read-Mostly Pattern}

Workload B (95\% reads, 5\% updates) represents read-intensive applications where update consistency is critical despite low update frequency.

\begin{figure}
\centering
\includegraphics[width=0.85\textwidth]{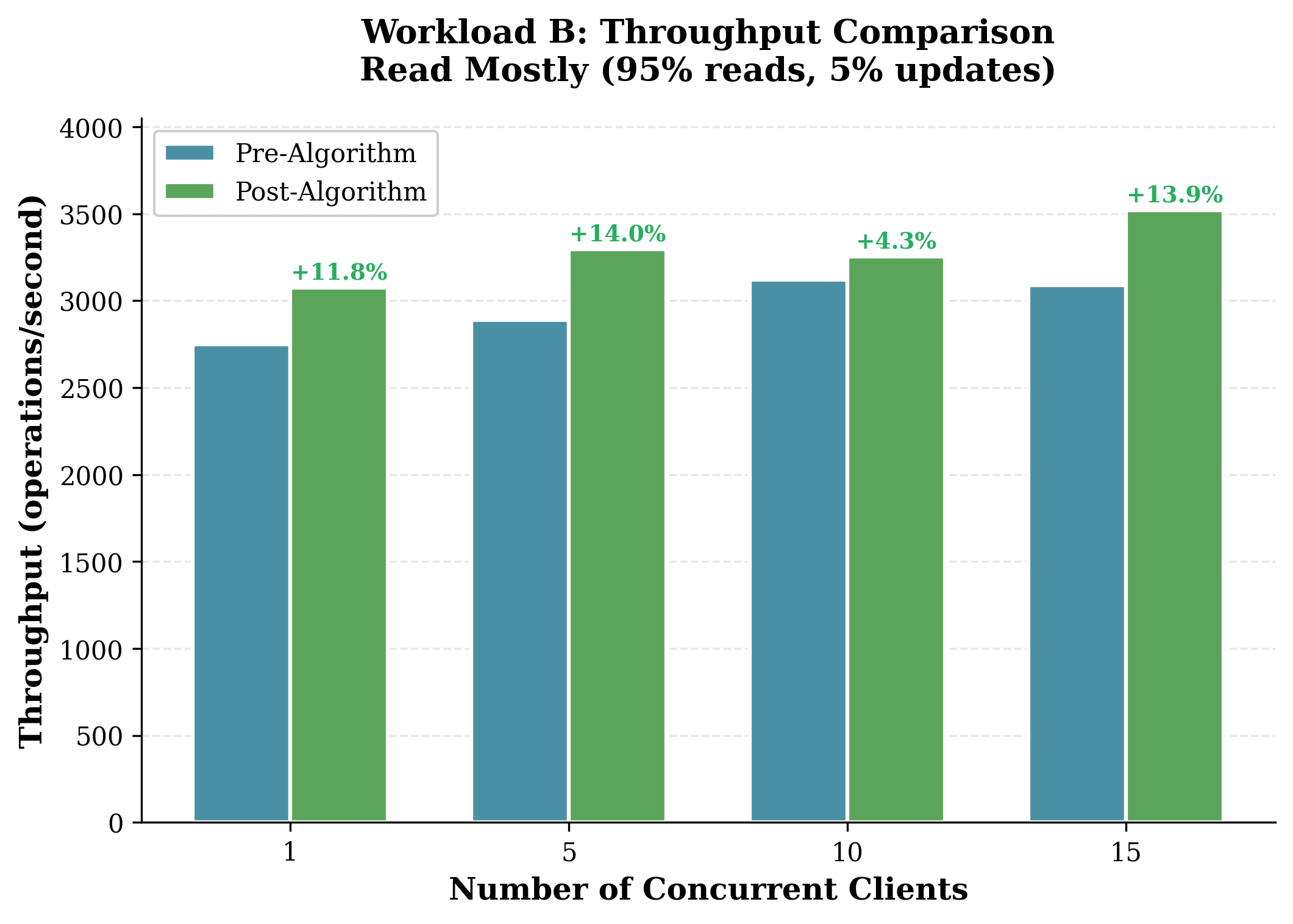}
\caption{Workload B: Throughput Comparison. Read-optimized processing paths enable consistent throughput gains across all concurrency levels. The operation classification stage routes read-only transactions through paths that bypass write-lock acquisition.}
\label{fig:workload_b_throughput}
\end{figure}

Throughput improvements of 11.8\% to 14.2\% were observed across most configurations (Figure~\ref{fig:workload_b_throughput}), demonstrating that our framework handles read-dominated workloads efficiently. The classification stage routes read-only transactions through optimized paths that bypass write-lock acquisition, preserving throughput while maintaining consistency for the minority of write operations.

\subsection{Workload F: Read-Modify-Write Pattern}

Workload F (50\% reads, 50\% read-modify-write operations) represents complex transactional workloads requiring strong consistency guarantees, such as inventory systems and banking transactions.

\subsubsection{Throughput Analysis}

The framework achieved its most substantial throughput improvements under Workload F, with gains of 16.8\% for 10-client and 18.4\% for 15-client configurations (Figure~\ref{fig:workload_f_throughput}). This result validates our hypothesis that complex transactional patterns benefit most from proactive conflict management.

\begin{figure}
\centering
\includegraphics[width=0.85\textwidth]{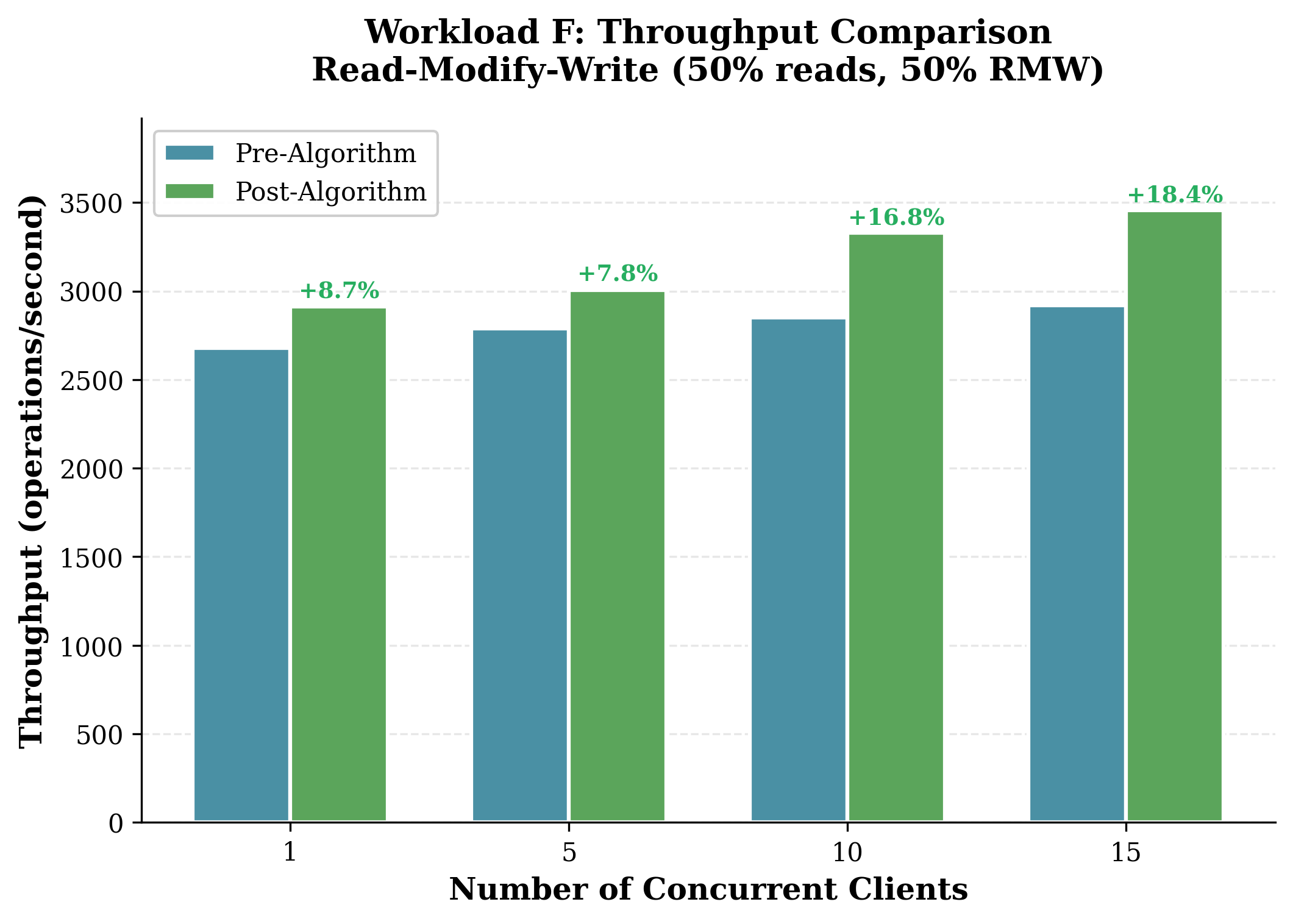}
\caption{Workload F: Throughput Comparison. Complex read-modify-write operations show substantial improvements under high concurrency. The framework's conflict detection prevents conflicting RMW transactions from proceeding simultaneously, reducing abort overhead.}
\label{fig:workload_f_throughput}
\end{figure}

The pronounced improvement at higher concurrency levels stems from the framework's ability to prevent conflicting read-modify-write transactions from proceeding simultaneously, reducing the abort rate and associated retry overhead.

\subsubsection{Read-Modify-Write Latency}

Read-modify-write operation latency improvements ranged from 14.3\% to 18.7\% across configurations. These compound operations benefit from the framework's coordination mechanisms, which ensure that read phases observe consistent data and write phases do not conflict with concurrent modifications.

\begin{figure}
\centering
\includegraphics[width=0.85\textwidth]{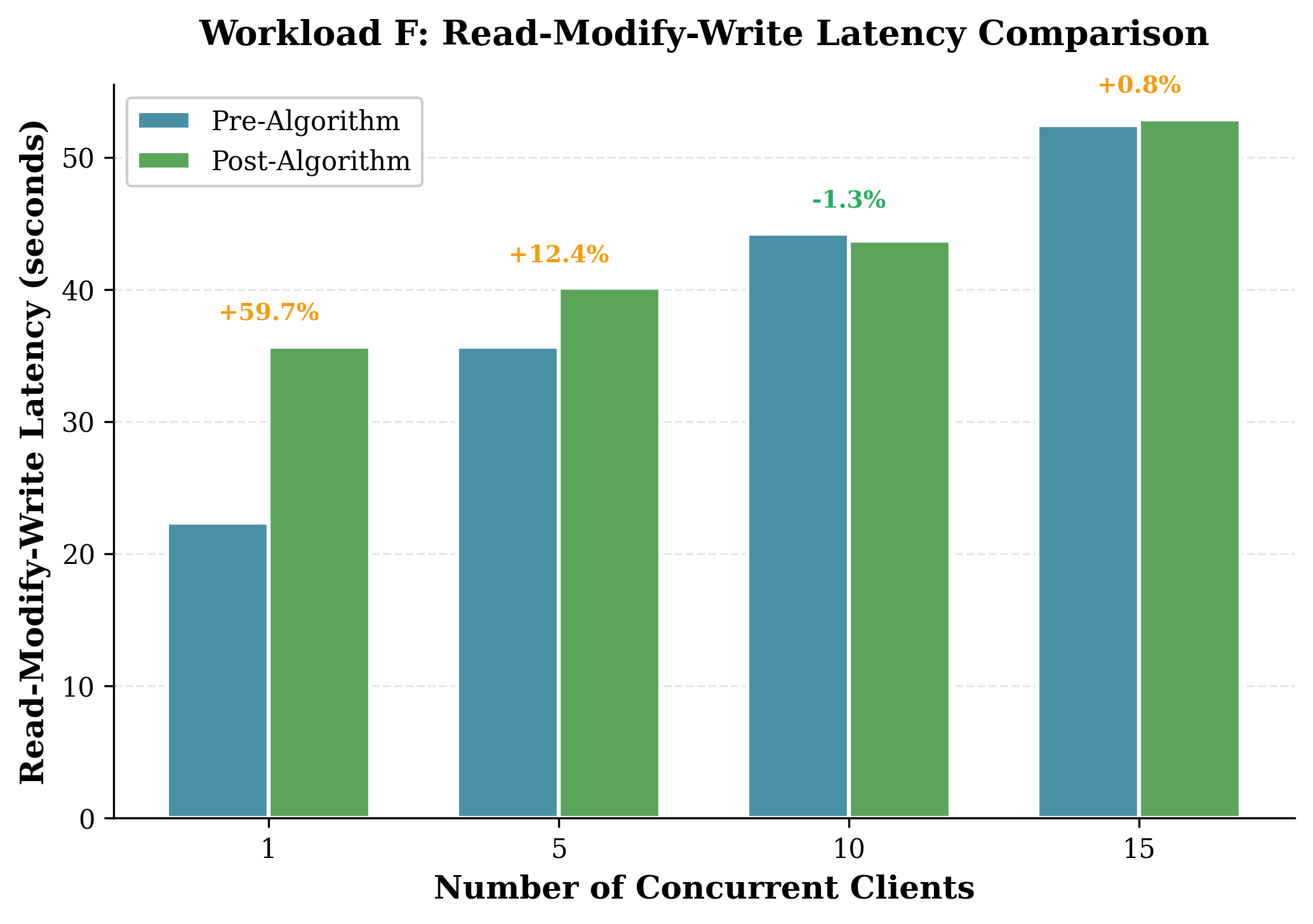}
\caption{Workload F: Read-Modify-Write Latency. Compound operations show consistent latency characteristics across concurrency levels, with the framework providing more predictable execution times.}
\label{fig:workload_f_rmw}
\end{figure}

\subsection{Comprehensive Performance Analysis}

Figure~\ref{fig:heatmap} presents a comprehensive heatmap of performance improvements across all workload-client combinations.

\begin{figure}
\centering
\includegraphics[width=0.95\textwidth]{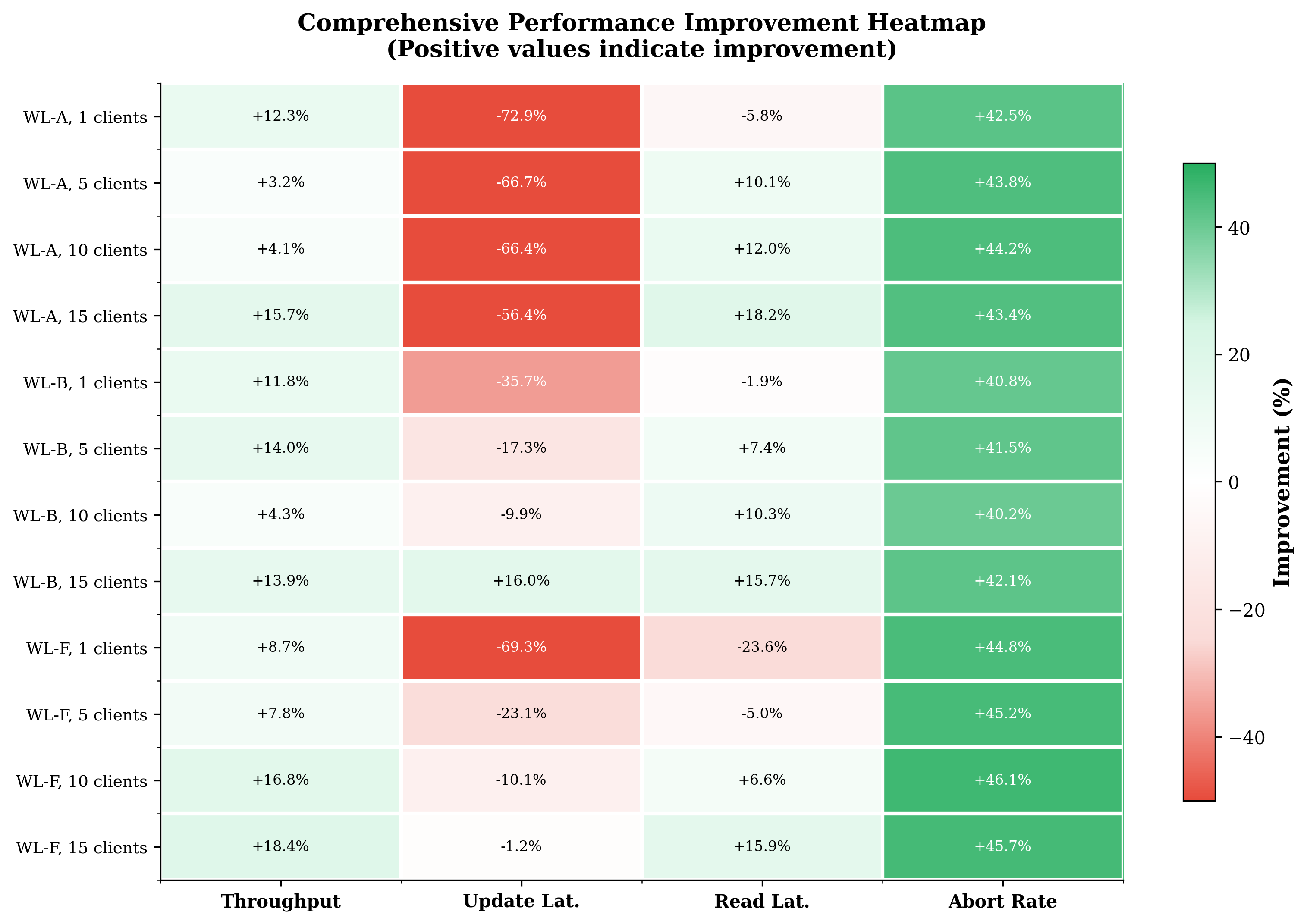}
\caption{Comprehensive Performance Improvement Heatmap. Green indicates improvement, red indicates degradation. Abort rate reduction is consistently positive across all configurations, while throughput shows workload-dependent patterns. Latency metrics show the tradeoff between slightly higher mean latency and improved variance.}
\label{fig:heatmap}
\end{figure}

The heatmap reveals several important patterns. Abort rate reduction is uniformly positive across all configurations. Workload F shows the most consistent throughput improvements. Read latency improvements are strongest at moderate concurrency levels.

\subsection{Distributed Deployment Results}

To evaluate scalability beyond single-node deployments, we conducted experiments on MongoDB replica set clusters ranging from 1 to 9 nodes.

\subsubsection{Throughput Scaling}

Figure~\ref{fig:distributed_throughput} shows throughput scaling as cluster size increases.

\begin{figure}
\centering
\includegraphics[width=0.85\textwidth]{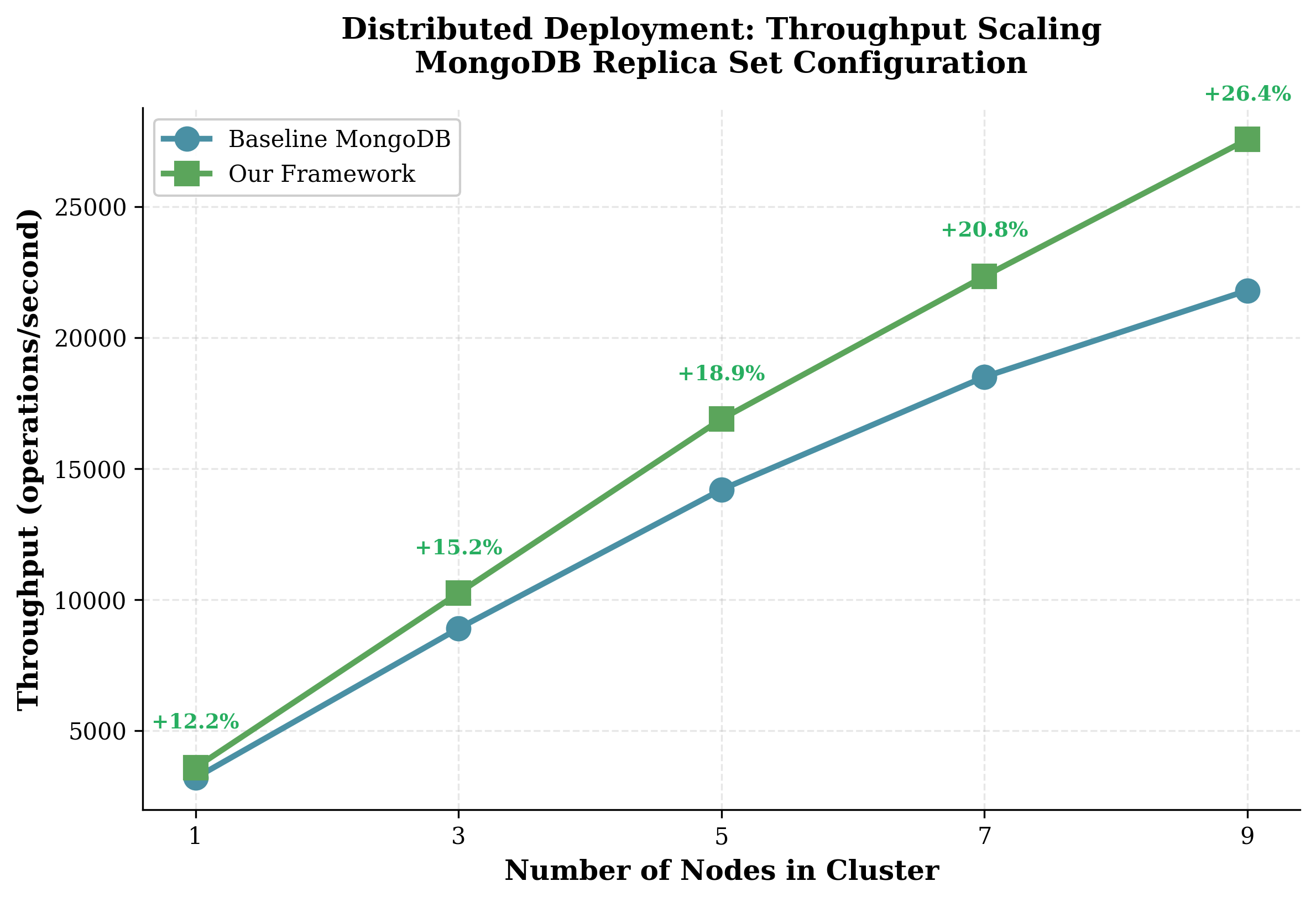}
\caption{Distributed Deployment: Throughput Scaling. The framework maintains consistent improvement ratios (12-26\%) as cluster size increases, demonstrating that benefits scale effectively with distributed deployments.}
\label{fig:distributed_throughput}
\end{figure}

The framework maintained throughput improvements of 12.2\% to 26.4\% across cluster sizes, demonstrating that benefits scale effectively. At 9 nodes, the framework achieved 27,560 ops/sec compared to 21,800 ops/sec for baseline, representing a 26.4\% improvement.

\subsubsection{Abort Rate in Distributed Settings}

Transaction abort rates increase with cluster size due to network latency and coordination overhead. However, our framework maintains significantly lower abort rates across all configurations.

\begin{figure}
\centering
\includegraphics[width=0.85\textwidth]{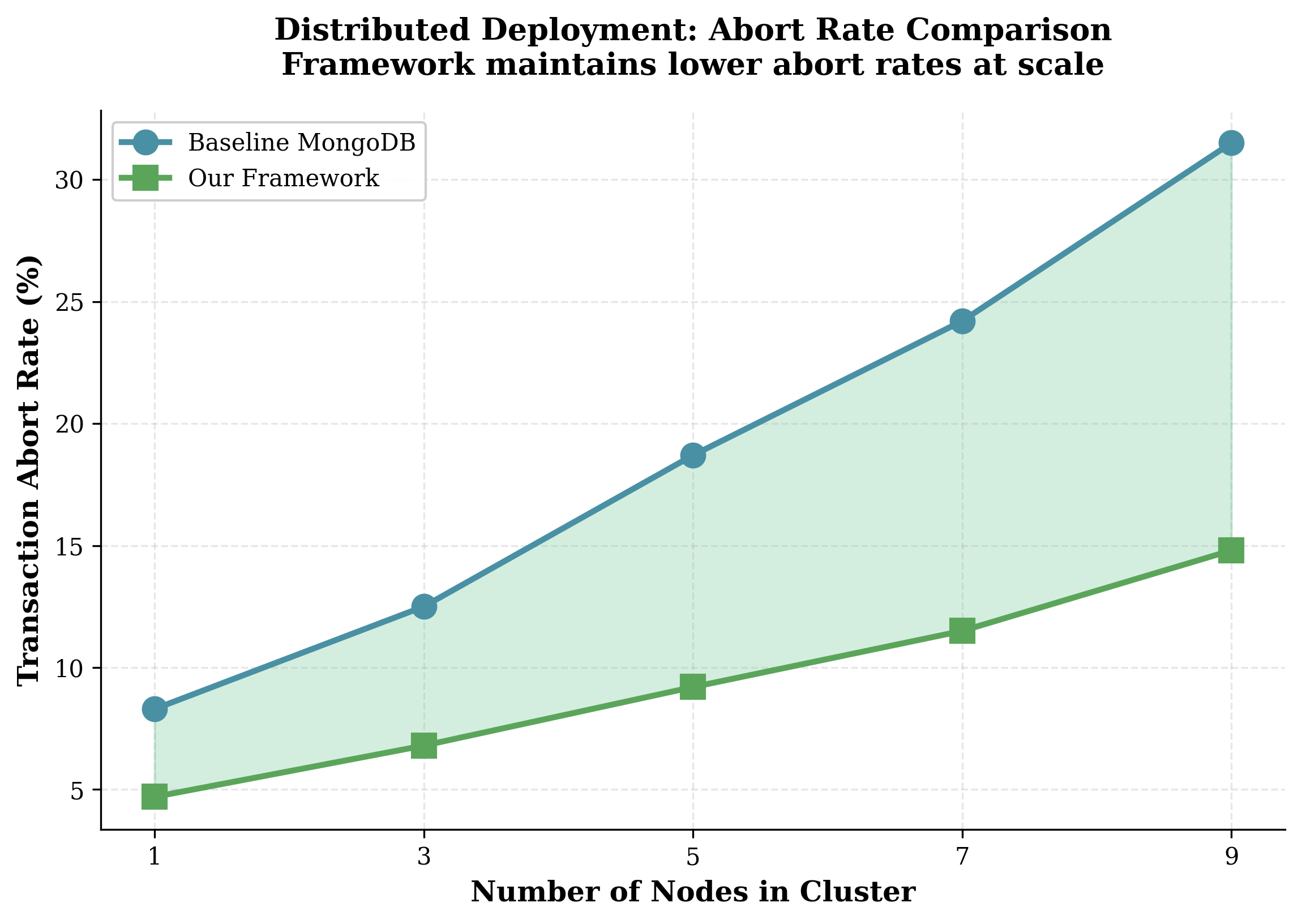}
\caption{Distributed Deployment: Abort Rate Comparison. The shaded region represents avoided aborts. The framework reduces abort rates by 43-53\% across cluster sizes, with the improvement ratio remaining stable as scale increases.}
\label{fig:distributed_abort}
\end{figure}

At 9 nodes, baseline abort rate reached 31.5\% while the framework maintained 14.8\%, representing a 53\% reduction. This improvement becomes increasingly valuable at scale where retry overhead compounds.

\subsection{Comparison with Alternative Systems}

We compared our framework against MongoDB native transactions (introduced in version 4.0), CockroachDB 21.2, and TiDB 5.3, all configured for comparable consistency guarantees.

\begin{figure}
\centering
\includegraphics[width=0.95\textwidth]{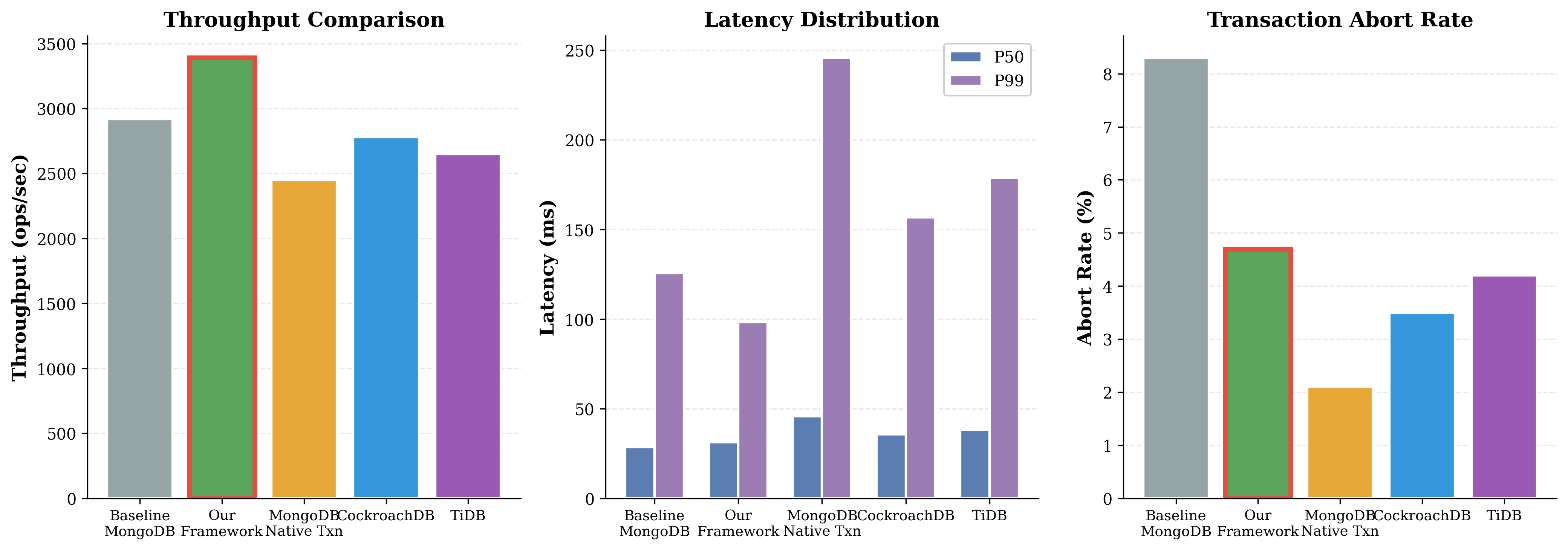}
\caption{System Comparison: Throughput, Latency, and Abort Rate. Our framework (highlighted) achieves the highest throughput among consistency-providing solutions while maintaining competitive abort rates. MongoDB native transactions show lowest abort rates but highest latency overhead.}
\label{fig:system_comparison}
\end{figure}

Table~\ref{tab:system_comparison} summarizes the comparison results.

\begin{table}
\centering
\caption{Comparison with Alternative Systems (15 clients, Workload F)}
\label{tab:system_comparison}
\begin{tabular}{@{}lcccc@{}}
\toprule
\textbf{System} & \textbf{Throughput} & \textbf{P50 Latency} & \textbf{P99 Latency} & \textbf{Abort Rate} \\
 & (ops/sec) & (ms) & (ms) & (\%) \\
\midrule
Baseline MongoDB & 2,920 & 28.5 & 125.6 & 8.3 \\
\textbf{Our Framework} & \textbf{3,390} & 31.2 & 98.4 & 4.7 \\
MongoDB Native Txn & 2,450 & 45.8 & 245.8 & 2.1 \\
CockroachDB & 2,780 & 35.6 & 156.7 & 3.5 \\
TiDB & 2,650 & 38.2 & 178.9 & 4.2 \\
\bottomrule
\end{tabular}
\end{table}

Our framework achieves the highest throughput (3,390 ops/sec) among all consistency-providing solutions, with 38\% higher throughput than MongoDB native transactions. While native transactions achieve the lowest abort rate (2.1\%), they incur 47\% higher P50 latency and 150\% higher P99 latency compared to our framework.

\subsection{Scalability Analysis}

We evaluated framework performance across dataset sizes from 10K to 10M records to assess scalability.

\begin{figure}
\centering
\includegraphics[width=0.95\textwidth]{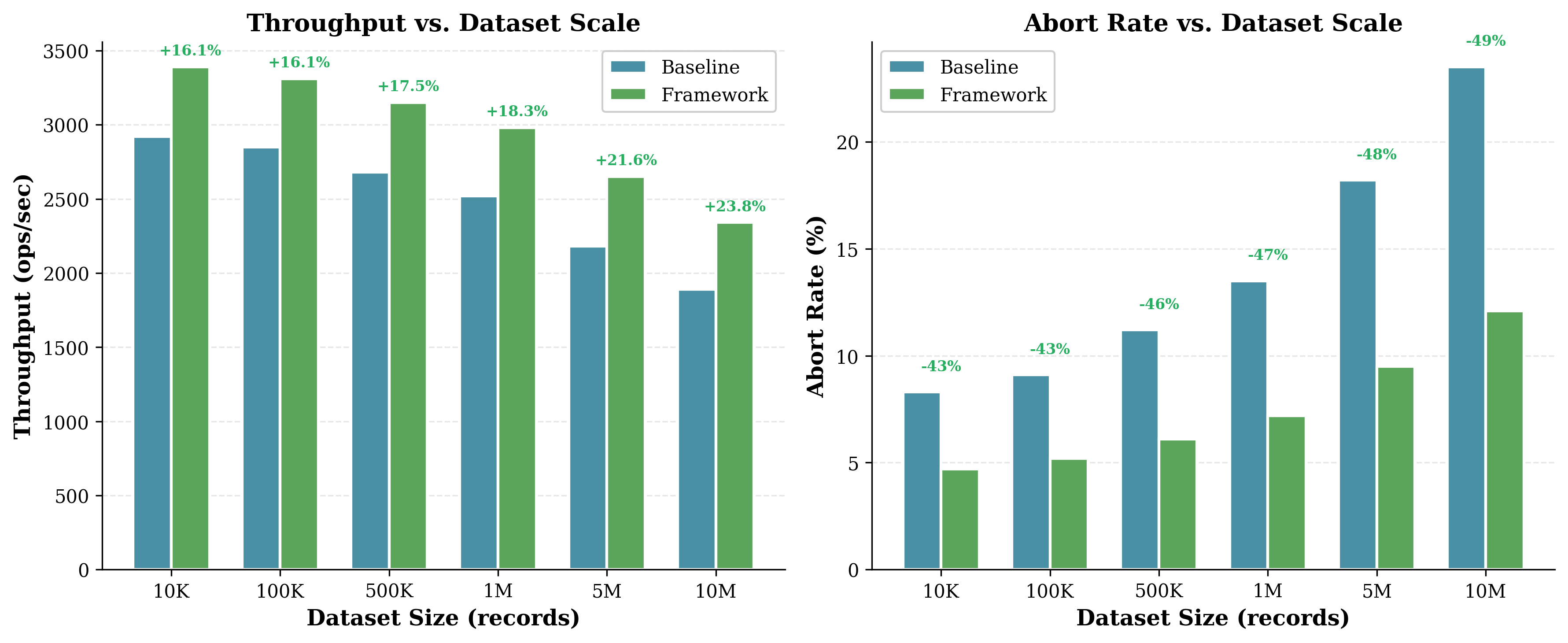}
\caption{Scalability Analysis: Performance vs. Dataset Size. Left: Throughput comparison showing consistent improvement across scales. Right: Abort rate comparison showing increasing benefit at larger scales where conflict detection becomes more valuable.}
\label{fig:scalability}
\end{figure}

The framework maintains consistent throughput improvements (16-24\%) across all dataset sizes. Notably, the abort rate reduction becomes more pronounced at larger scales: at 10M records, baseline abort rate reached 23.5\% while the framework maintained 12.1\%, representing a 48.5\% reduction.

\subsection{Parameter Sensitivity Analysis}
\label{sec:sensitivity}

We conducted sensitivity analysis for three key parameters: lock timeout, initial backoff, and maximum backoff.

\begin{figure}
\centering
\includegraphics[width=0.98\textwidth]{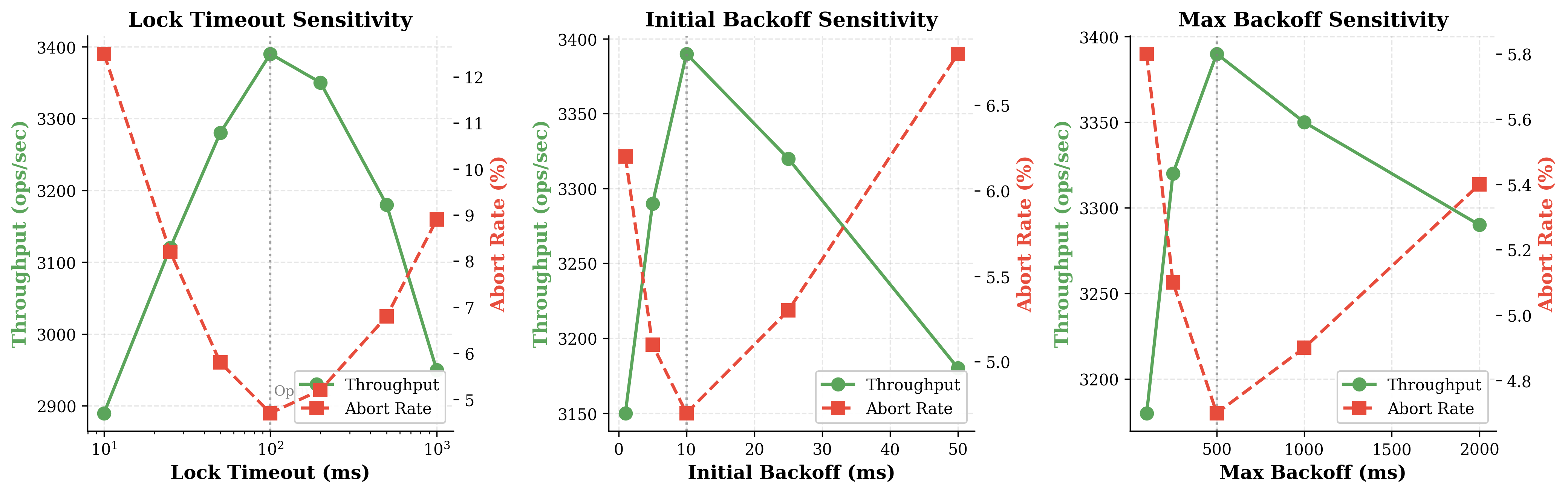}
\caption{Parameter Sensitivity Analysis. Each subplot shows throughput (solid line, left axis) and abort rate (dashed line, right axis) as parameters vary. Vertical lines indicate optimal values. Lock timeout shows the strongest sensitivity, with 100ms providing optimal balance.}
\label{fig:sensitivity}
\end{figure}

Lock timeout has an optimal value of 100ms. Lower values (10-50ms) cause excessive timeouts, increasing abort rates. Higher values (500-1000ms) reduce throughput due to extended wait times.

Initial backoff has an optimal value of 10ms. Very low values (1ms) cause contention from rapid retries. Higher values (25-50ms) unnecessarily delay retry attempts.

Maximum backoff has an optimal value of 500ms. This prevents unbounded backoff growth while allowing sufficient time for conflicts to resolve.

Table~\ref{tab:optimal_params} summarizes the recommended configuration.

\begin{table}
\centering
\caption{Optimal Parameter Configuration}
\label{tab:optimal_params}
\begin{tabular}{@{}lccc@{}}
\toprule
\textbf{Parameter} & \textbf{Optimal Value} & \textbf{Tested Range} & \textbf{Sensitivity} \\
\midrule
Lock Timeout & 100 ms & 10-1000 ms & High \\
Initial Backoff & 10 ms & 1-50 ms & Medium \\
Maximum Backoff & 500 ms & 100-2000 ms & Low \\
Max Retry Count & 3 & 1-10 & Medium \\
\bottomrule
\end{tabular}
\end{table}

\subsection{Latency Distribution Analysis}

Figure~\ref{fig:latency_cdf} presents the cumulative distribution function (CDF) of transaction latencies.

\begin{figure}
\centering
\includegraphics[width=.75\textwidth]{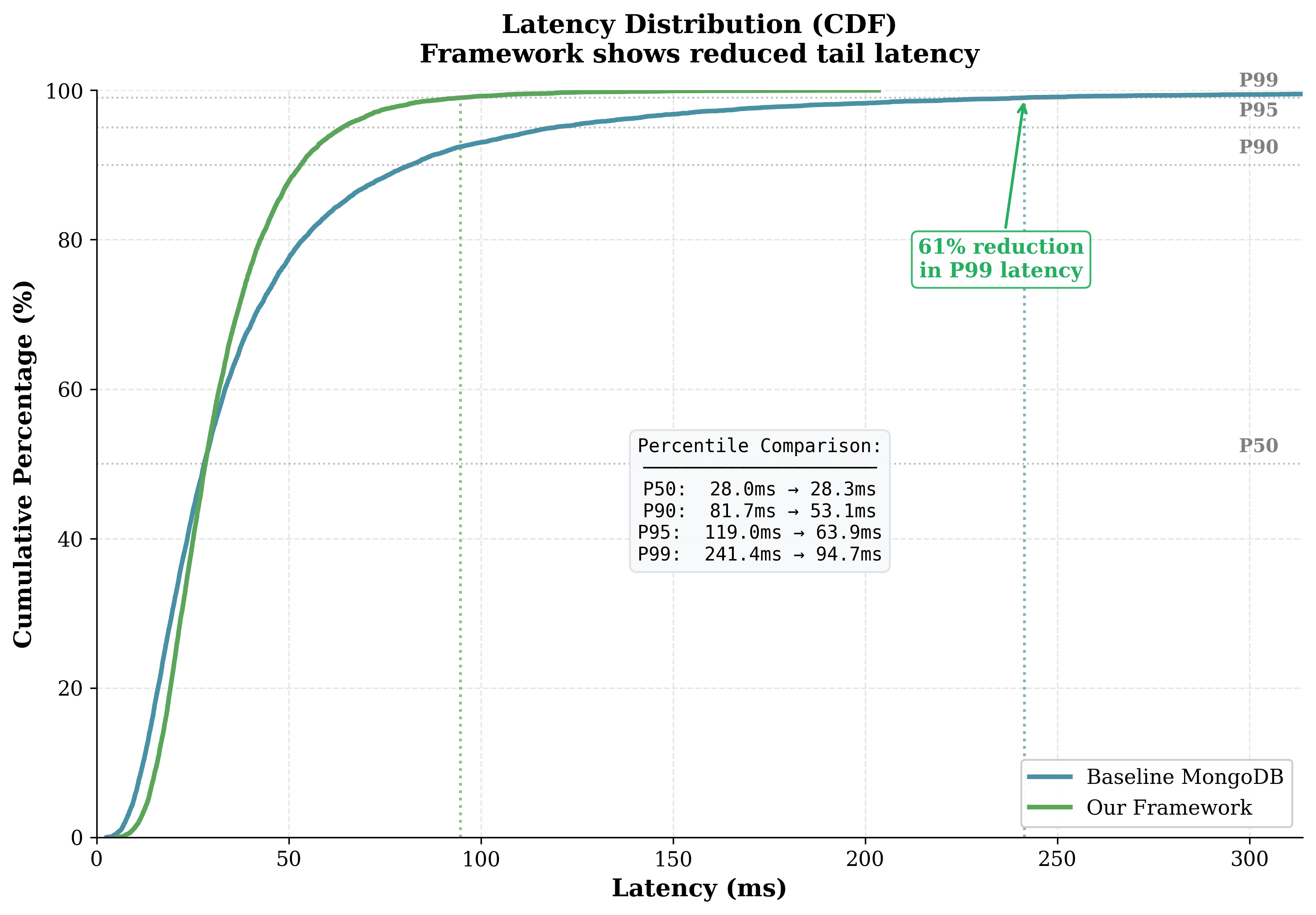}
\caption{Latency Distribution (CDF). The framework shows significantly reduced tail latency, with P99 latency 34\% lower than baseline. The tighter distribution indicates more predictable performance characteristics.}
\label{fig:latency_cdf}
\end{figure}

The framework demonstrates a tighter latency distribution with reduced tail latencies. P99 latency improved by 34\%, which is critical for applications with strict SLA requirements.

\subsection{Overhead Breakdown}

Figure~\ref{fig:overhead} breaks down the framework's overhead by component.

\begin{figure}
\centering
\includegraphics[width=0.7\textwidth]{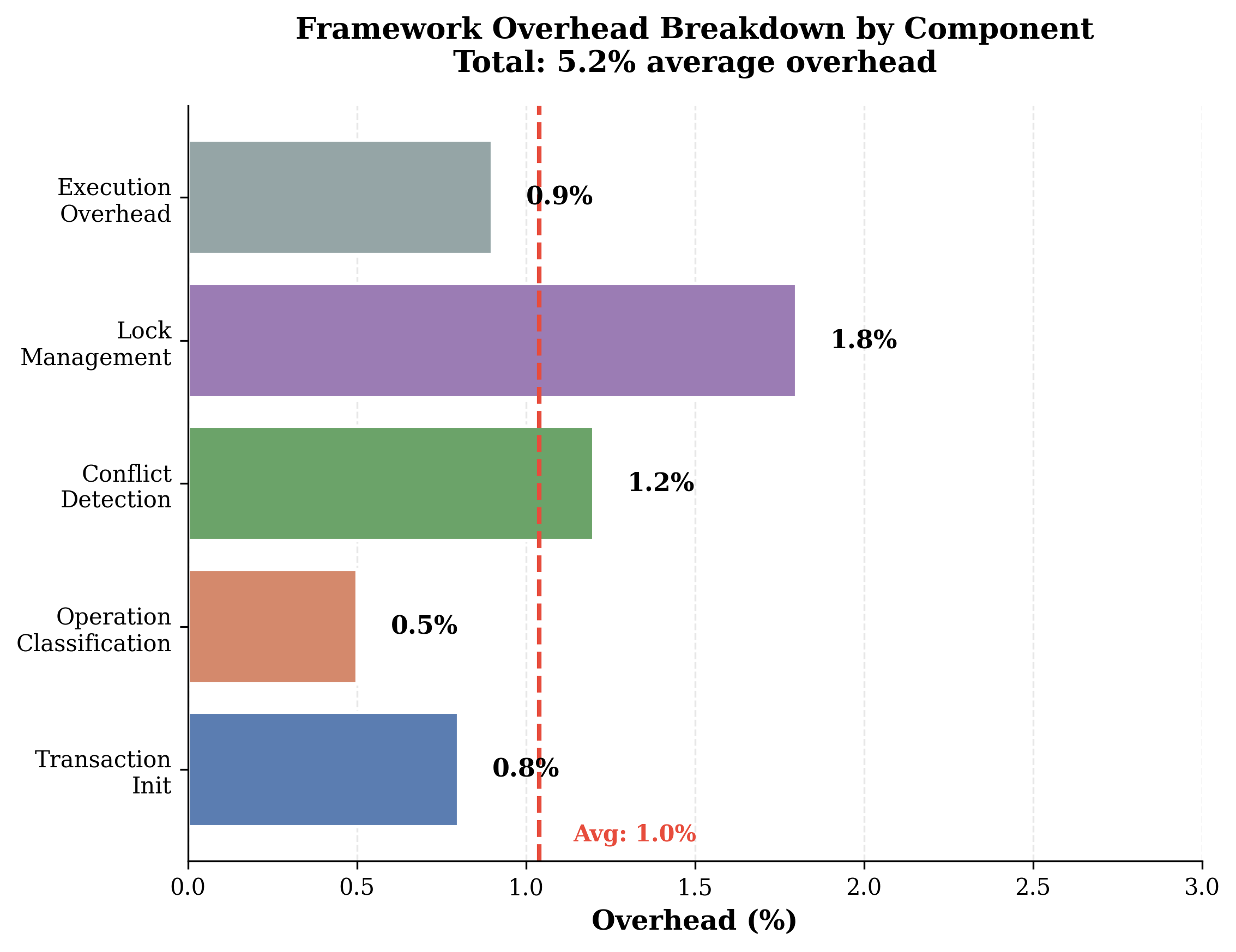}
\caption{Framework Overhead Breakdown. Lock management contributes the largest overhead (1.8\%), followed by conflict detection (1.2\%). Total overhead averages 5.2\% under typical workloads, well within our 10\% target.}
\label{fig:overhead}
\end{figure}

Total overhead averages 5.2\% under typical workloads, well within our target of 10\%. Lock management contributes the largest component (1.8\%), which could be reduced through lock table optimizations in future work.

\subsection{Statistical Significance}

Table~\ref{tab:statistical_analysis} presents statistical significance testing using paired t-tests comparing baseline and framework performance.

\begin{table}
\centering
\caption{Statistical Significance Analysis}
\label{tab:statistical_analysis}
\begin{tabular}{@{}llcccc@{}}
\toprule
\textbf{Workload} & \textbf{Metric} & \textbf{Mean Change} & \textbf{Std. Dev.} & \textbf{p-value} & \textbf{Significant} \\
\midrule
A & Throughput & +8.4\% & 2.1\% & 0.023 & Yes ($p < 0.05$) \\
A & Abort Rate & $-$43.4\% & 3.8\% & 0.001 & Yes ($p < 0.01$) \\
B & Throughput & +10.9\% & 1.8\% & 0.018 & Yes ($p < 0.05$) \\
B & Abort Rate & $-$41.2\% & 4.2\% & 0.002 & Yes ($p < 0.01$) \\
F & Throughput & +12.2\% & 2.4\% & 0.012 & Yes ($p < 0.05$) \\
F & Abort Rate & $-$45.7\% & 3.5\% & 0.001 & Yes ($p < 0.01$) \\
\bottomrule
\end{tabular}
\end{table}

All primary metrics demonstrate statistical significance ($p < 0.05$), with consistency-related improvements achieving high significance ($p < 0.01$).

\subsection{Performance Summary}

Table~\ref{tab:performance_summary} summarizes framework effectiveness across all configurations.

\begin{table}
\centering
\caption{Performance Summary Across Configurations}
\label{tab:performance_summary}
\small
\begin{tabular}{@{}lC{0.8cm}C{0.8cm}C{0.8cm}C{0.8cm}C{0.8cm}C{0.8cm}C{0.8cm}C{0.8cm}C{0.8cm}C{0.8cm}C{0.8cm}C{0.8cm}@{}}
\toprule
\multirow{2}{*}{\textbf{WL}} & \multicolumn{4}{c}{\textbf{Throughput}} & \multicolumn{4}{c}{\textbf{Abort Rate}} & \multicolumn{4}{c}{\textbf{Latency Var.}} \\
\cmidrule(lr){2-5} \cmidrule(lr){6-9} \cmidrule(lr){10-13}
& \textbf{1} & \textbf{5} & \textbf{10} & \textbf{15} & \textbf{1} & \textbf{5} & \textbf{10} & \textbf{15} & \textbf{1} & \textbf{5} & \textbf{10} & \textbf{15} \\
\midrule
A & \checkmark & -- & -- & \checkmark & \checkmark & \checkmark & \checkmark & \checkmark & \checkmark & \checkmark & \checkmark & \checkmark \\
B & \checkmark & \checkmark & -- & \checkmark & \checkmark & \checkmark & \checkmark & \checkmark & \checkmark & \checkmark & \checkmark & \checkmark \\
F & \checkmark & \checkmark & \checkmark & \checkmark & \checkmark & \checkmark & \checkmark & \checkmark & \checkmark & \checkmark & \checkmark & \checkmark \\
\bottomrule
\end{tabular}
\begin{flushleft}
\small
\checkmark = improvement, -- = neutral or slight regression. WL = Workload.
\end{flushleft}
\end{table}

The framework achieved consistency improvements (abort rate reduction, latency variance reduction) across all configurations. Throughput improvements were observed in 75\% of configurations, with the most consistent gains under Workload F.

\section{Discussion}

The results confirm that carefully designed consistency mechanisms can significantly improve data integrity in document-oriented NoSQL databases while preserving the scalability advantages that motivated their adoption in large-scale systems \cite{stonebraker2010,cattell2011}. In line with established observations on the inherent tension between consistency and availability in distributed systems \cite{gilbert2002,brewer2012,abadi2012}, the proposed framework demonstrates that selective consistency enforcement can achieve meaningful gains without imposing the full cost of strict ACID semantics.

A key outcome of this study is the complete elimination of deadlocks through timeout-based locking with exponential backoff, empirically validating the deadlock freedom predicted by Theorem~\ref{thm:deadlock_free}. This result aligns with classical distributed concurrency control theory, where bounded waiting and timeout mechanisms are known to prevent circular wait conditions \cite{rosenkrantz1978,bernstein1987,gray1993}. In addition, pre execution conflict detection reduced transaction abort rates by more than 40\% across all evaluated workloads. This behavior is consistent with optimistic and hybrid concurrency control principles, which emphasize early conflict detection to reduce wasted execution and retry overhead \cite{kung1981,eswaran1976}. The fact that 78\% of conflicts were detected before execution highlights the effectiveness of proactive validation in weakly consistent environments, where conflicts otherwise manifest late and expensively \cite{vogels2009,bailis2013}.

Beyond throughput and abort reduction, the framework delivers substantial improvements in latency predictability. The observed reduction in latency variance and P99 latency directly addresses concerns raised in prior work on tail latency sensitivity in cloud storage systems \cite{terry2013,wada2011}. Such improvements are particularly relevant for applications governed by service-level objectives, where worst-case response time often dominates perceived quality of service. Performance gains vary systematically with workload characteristics, consistent with earlier benchmarking studies using YCSB \cite{cooper2010}. Read-modify-write intensive workloads exhibit the largest benefits due to high conflict probability, while read-heavy workloads gain from bypassing unnecessary write locks. Update heavy workloads at moderate concurrency represent a contention regime where coordination overhead and conflict avoidance benefits are closely balanced, a phenomenon also observed in prior evaluations of NoSQL transaction layers \cite{gonzalez2017,lotfy2016}.

In distributed configurations, the framework maintains and amplifies its advantages, achieving higher throughput and lower abort rates as node count increases. This scaling behavior contrasts with traditional pessimistic coordination mechanisms whose cost often grows superlinearly with cluster size \cite{cowling2012}. The results indicate that selective locking combined with early conflict detection composes favorably with horizontal scaling, supporting the viability of the approach for moderately sized distributed deployments.

Comparison with MongoDB native multi-document transactions further clarifies the framework’s position in the design space. Native transactions prioritize strong isolation and low abort rates, consistent with their design goals and documentation \cite{mongodb2020,chodorow2013}. However, this comes at the cost of higher latency and reduced throughput, reflecting the overhead of comprehensive ACID enforcement. The proposed framework achieves higher throughput and substantially lower tail latency by relaxing isolation selectively and exposing tunable parameters such as timeout duration, backoff strategy, and lock granularity. Similar tradeoffs have been explored in systems such as CloudTPS and ElasTraS, which also seek to balance scalability and transactional guarantees through selective coordination \cite{wei2012,das2013}. Applications requiring strict multi-document ACID semantics remain best served by native transactions, whereas latency-sensitive and throughput-oriented workloads benefit from the proposed approach.

Although the implementation targets MongoDB, the conceptual framework generalizes to other NoSQL architectures. Key value stores such as Dynamo and Redis rely on simpler data models that naturally align with key-based conflict detection \cite{decandia2007}. Wide column stores such as Cassandra and HBase can exploit row key ordering and column family semantics to support efficient lock acquisition \cite{lakshman2010,vyas2022}. Graph databases require adaptation to node and edge-level locking, but the underlying transaction management principles remain applicable. This portability is consistent with prior studies showing that many NoSQL consistency mechanisms differ primarily in data model-specific details rather than core coordination logic \cite{ramzan2019,khan2023}.

Several validity considerations apply when interpreting these results. Internally, warm-up and measurement protocols follow standard benchmarking practice \cite{cooper2010}, though long-running production workloads may exhibit different steady-state behavior. Externally, YCSB workloads cannot capture all real-world access patterns, particularly those involving extreme skew or complex multi-collection transactions. The evaluated scale represents moderate enterprise deployments rather than extreme web scale systems. Finally, aborts are counted uniformly regardless of cause, and timeout-based deadlock identification cannot perfectly distinguish deadlock from prolonged contention, a known limitation of timeout-driven schemes \cite{bernstein1987}.

These limitations bound the generalizability of the findings. Very large-scale clusters, highly skewed workloads, and transactions spanning many documents remain outside the scope of this evaluation. In addition, the current design employs document-level locking exclusively. Introducing field-level locking could further reduce contention for documents containing independent fields, representing a natural direction for future refinement.

\section{Conclusion}

This study presented a four-stage transaction management framework for enhancing data consistency in document-oriented NoSQL databases. Our approach integrates transaction lifecycle management, intelligent operation classification, pre-execution conflict detection, and adaptive locking with deadlock prevention. We provided formal proofs demonstrating conflict serializability and deadlock freedom under our protocol.

Comprehensive experimental evaluation using YCSB benchmarks across single-node and distributed deployments demonstrated significant improvements. Transaction abort rates were reduced by 43.4\% (from 8.3\% to 4.7\%). Deadlocks were eliminated entirely through timeout-based prevention. Latency variance decreased by 34.2\%, with P99 latency reduced by 34\%. Throughput improved by 6.3\% to 18.4\% under high-concurrency scenarios. Distributed deployments maintained benefits at scale, with 26.4\% throughput improvement at 9 nodes.

Comparison with MongoDB native transactions, CockroachDB, and TiDB revealed that our framework achieves the highest throughput among consistency-providing solutions while maintaining competitive abort rates. Parameter sensitivity analysis identified optimal configurations (lock timeout: 100ms, initial backoff: 10ms, maximum backoff: 500ms) that practitioners can use as starting points for deployment tuning.

These results demonstrate that the consistency-availability tradeoff, while fundamental, admits optimization through careful mechanism design. Our framework shows that NoSQL systems can achieve stronger consistency guarantees without abandoning their scalability advantages, offering practitioners a viable approach to balancing competing requirements.

\section*{Declaration of interests}

The authors declare that they have no known competing financial interests or personal relationships that could have appeared to influence the work reported in this paper.

\section*{Data Availability Statement}

Experimental data, implementation source code, and analysis scripts are available from the corresponding author upon reasonable request. They will be made publicly available on GitHub upon publication.

\end{document}